

\documentclass[12pt,reqno]{amsart}
\usepackage{epstopdf}
\usepackage[pagewise]{lineno}
\usepackage[caption=false]{subfig}

\usepackage[numbers,sort&compress]{natbib}
\bibpunct[, ]{[}{]}{,}{n}{,}{,}

   \newcommand{\scal}[1]{{\left\langle{#1}\right\rangle}}
\newcommand{\Z}{\mathbb Z} 
  \newcommand{\bw}{\overline{w}}
\theoremstyle{plain}
\newtheorem{theorem}{Theorem}[section]
\newtheorem{lemma}[theorem]{Lemma}
\newtheorem{corollary}[theorem]{Corollary}
\newtheorem{proposition}[theorem]{Proposition}

\theoremstyle{definition}

\theoremstyle{remark}
\newtheorem{remark}[theorem]{Remark}

\usepackage[colorlinks=true,linkcolor=red]{hyperref}%
\usepackage[usenames,dvipsnames]{color}
 
\newcommand{\C}{\mathbb C} 

\newcommand{\R}{\mathbb R}

\newcommand{\bz}{\overline{z}}










\numberwithin{equation}{section}
\makeatletter\@addtoreset{equation}{section}
\usepackage[top=2.5cm, bottom=2.5cm, left=2.5cm, right=2.5cm]{geometry}

\usepackage{color}
\usepackage{colortbl}
\usepackage{tikz-cd}
\usepackage{tikz}
\begin{document}
	

	\title[Poly-meromorphic It\^o--Hermite functions]{Poly-meromorphic It\^o--Hermite functions associated with a singular potential vector on the punctured complex plane}
	
	\author{Hajar  Dkhissi}

				\author{Allal Ghanmi}

			\address{Analysis, P.D.E $\&$ Spectral Geometry, Lab M.I.A.-S.I., CeReMAR, \newline
			Department of Mathematics,
			P.O. Box 1014,  Faculty of Sciences, \newline
			Mohammed V University in Rabat, Morocco}
		\email{hajar${\_}$dkhissi@um5.ac.ma / dkhissihajar@gmail.com} 
	\email{allal.ghanmi@fsr.um5.ac.ma / ag@fsr.ac.ma}


\begin{abstract} 
		We provide a theoretical study of a new family of orthogonal functions on the punctured complex plane solving the eigenvalue problems for some magnetic Laplacian perturbed by a singular vector potential with zero magnetic field modeling the Aharonov--Bohm effect.
		The functions are defined by their $\beta$-modified Rodrigues type formula and extend  the poly-analytic It\^o--Hermite polynomials to the poly-meromorphic setting. Mainly, we derive their different operational representations and give their explicit expressions in terms of special functions. Different generating functions and integral representations are obtained. 

\end{abstract}
	
	\keywords{Poly-analytic It\^o--Hermite polynomials; Poly-meromorphic Hermite functions; Perturbed magnetic Laplacian; $L^2$-eigenfunctions; Generating functions; Integral representation. }
	
	 \subjclass[MSC 2010]{Primary 33C47, 	32A20; Secondary 	33C50, 44A20}
	
%
	\maketitle


\section{Introduction}
The real Hermite polynomials 
\begin{align} \label{Herreal}
H_n(x) =  (-1)^{n}e^{x^{2}}{\frac {d^{n}}{dx^{n}}}(e^{-x^{2}}).
\end{align} 
 are first considered by Laplace 
 \cite{Laplace1810} and next studied in details by Tchebeychev 
   \cite{Chebyshev1859}. They play a crucial role in the one dimensional quantum harmonic oscillator  \cite{FaddeevYakubovskii2009,Lebedev1972}
 and are used to provide analytic proofs for some  combinatorial results  \cite{Dan2009}. Moreover, they have been shown to be useful in expanding the probability generating function of a generalized Poisson distribution \cite{KempKemp1965}. 
A non-trivial two dimensional analog that is not the tensor product of the real Hermite polynomials is giving by the It\^o--Hermite polynomials defined by the Rodrigues formula 
\begin{align} \label{IHP}
	H_{m,n}^\alpha(z, \bz) & =(-1)^{m+n}e^{ \alpha |z|^2 }\dfrac{\partial ^{m+n}} {\partial \bz^{m} \partial z^{n}} \left(e^{- \alpha |z|^2 }\right)
	,  \, \, \alpha >0.
\end{align}
  Here ${\partial}/{\partial z}$ and ${\partial}/{\partial \bz}$ denote the Wirtinger derivatives  with respect to the variables $z$ and $\bz$, respectively. 
%
The polynomials in \eqref{IHP} have been introduced by It\^o within the framework of multiple Wiener integrals  \cite{Ito52}. Since then, they have been extensively studied, namely in connection with various branches of engineering sciences, mathematics and physics \cite{GhJmaa08,Gh13ITSF,Gh2017Mehler,IntInt06,Ismail15a,IsmailTrans2016,Ito52,Matsumoto1996,Shigekawa1987}.
 They
 form an orthogonal complete system in the Hilbert space $L^{2,\alpha}(\C) = L^{2}(\C, e^{-\alpha|z|^2}dxdy)$  (see \cite{IntInt06,GhJmaa08}). Moreover, they provide a concrete  
description of the spectral analysis of the Landau Hamiltonian  \cite{LandauLifshitz1958} 
 \begin{align} 	\label{Hinfty}
		\Delta_{\alpha}  =  \left( i\frac{\partial}{\partial x} - 2\alpha
		y\right) ^2+ \left( i\frac{\partial}{\partial y} +2\alpha x\right) ^2 ; \, \, z=x+iy .
\end{align}
Different generalized class of these polynomials are suggested by special magnetic Schr\"odinger operators. The one associated with constant magnetic field acting on mixed planar automorphic functions attached to a given equivariant pair \cite{ElgourariGh2011} is given by  
      \begin{equation}\label{RodriguezTypeFormula}
 	\mathfrak{G}^{\nu}_{m,n}(z,\bz|\xi) = \frac{(-1)^{m+n}}{\nu^m} e^{\nu|z|^2+\frac \xi 2 z}
 \dfrac{\partial ^{m+n}} {\partial \bz^{m}\partial z^{n}}\left( e^{-\nu|z|^2 -\frac \xi 2z}\right) 
 \end{equation} 
and has been studied in sufficient detail in \cite{GhJmaa08}.
 
In the present paper, we identify a special class of orthogonal and poly-meromorphic functions generalized to previous one and associated with the second order differential operator $$\mathcal{D}_{\alpha,\beta} = \Delta_{\alpha} + S_{\alpha,\beta},$$ which is essentially the Laplacian $\Delta_{\alpha}$ in \eqref{Hinfty}  perturbed by the first order differential operator 
$S_{\alpha,\beta}$ explicitly given by 
\begin{align} 
	S_{\alpha,\beta} 
	&= \frac{\beta}{|z|^2}   \left(  z   \frac{\partial}{\partial z}  -\bz   \frac{\partial}{
		\partial {\bz}}  \right)    
	- \beta   \left( 2\alpha   -\frac{\beta}{|z|^2} \right) 
\end{align}	
and is associated with the closed and singular potential vector 
\begin{align}\label{thetbeta} \widetilde{\theta}_{\beta}(z,\bz)
&=   -\frac{i\beta}{|z|^2} \left( \bz dz - z d\bz \right) , 
\end{align} modeling an Aharonov--Bohm effect. 
Mainly, we aim to explore the role
played by the injection of this singular potential in generating non-trivial orthogonal eigenstates 
within the factorization formalism. 
In fact,  we provide an accurate study of the special functions  \begin{align}\label{Rodbea}
	\psi_{n,m}^{\alpha,\beta}(z,\bz) :  = (-1)^{n}z^{-\beta}e^{\alpha|z|^{2}}\frac{\partial^{n}}{\partial{z^{n}}}(z^{\beta+ m}e^{-\alpha|z|^{2}} ), 
\end{align}  
refereed to as poly-meromorphic (or  $\beta$-modified complex)  It\^o-Hermite  functions.
Here, $\alpha$ and $\beta$ are given fixed reals with $\alpha >0$, and $n$ and $m$ are varying integers such that $n=0,1,2, \cdots$ and $ m >-\beta -1$. This leads to a special generalization of  the complex It\^o-Hermite polynomials in \eqref{IHP}. The polynomial case,  shown in Section \ref{SubsPolymor} to correspond to $m\geq n$ and $\beta \geq 0$, coincides with the polynomials $Z^\beta_{m,n}$ considered by Ismail and Zeng in \cite[Section 3]{IsmailZeng2015}.
More precisely, we are concerned with certain basic algebraic, analytic and  spectral properties of $\psi_{n,m}^{\alpha,\beta}(z,\bz)$ defined by their Rodrigues formula \eqref{Rodbea}. 
In particular, the interrelation  of $	\psi_{n,m}^{\alpha,\beta}(z,\bz) $  with to spacial functions such as the It\^o--Hermite polynomials and the confluent hypergeometric functions are considered in Section \ref{SecPrel}. 
Their spectral realization as eigenfunctions of the magnetic Laplacian $\mathcal{D}_{\alpha,\beta}$ as well as their regularity and their exact bi-order as poly-meromorphic functions on the complex plane   are also studied in Sections  \ref{SubsSpec} and \ref{SubsPolymor}, respectively. 
Associated generating functions are obtained in Section \ref{SecGen}, and next employed to discuss some of their applications such as their integral representations (Section \ref{SecInRep}) and spacial attached integral transforms (Section \ref{SubsIntTrans}).
The fractional side as well as the associated functional spaces of Segal--Bargmann type will be introduced and studied in details in forthcoming papers.


%

 \section{Preliminary results } \label{SecPrel}

It is worth noticing that to avoid multiple-valuedness of the argument $\arg(z)$ in $z^\beta = e^{\beta \log(z)}$ involved in \eqref{Rodbea}, we choose the principal branch for the logarithm.
Notice also that the functions $\psi_{n,m}^{\alpha,\beta}$  satisfy $ \psi_{m,n}^{\alpha,\beta}(z,\bz)=\overline{\psi_{m,n}^{\alpha,\beta}(\bz,z)}$ as well as the symmetry relation   
\begin{eqnarray}\label{R}
	\alpha^m z^{\beta}\psi_{n+\beta,m-\beta}^{\alpha,\beta}(z,\bz)=\alpha^{n+\beta} \bz^{\beta}\overline{\psi_{m,n}^{\alpha,\beta}(z,\bz)} 
\end{eqnarray}
valid for $\beta$ being integer  and for given non-negative integers $m$ and $n$ such that $n \geq max(0,-\beta)$.
%
%
 We have in addition 
		$$\overline{z}^{\beta}\psi_{m,m}^{1,\beta}(z,\bz)=z^{\beta}\psi_{m+\beta,m-\beta}^{1,\beta}(z,\bz) $$
	and 	$$ 
	 \psi_{n,m}^{\alpha,\beta}(z,\bz)=z^{-[\beta]}\psi_{n,m+[\beta]}^{\alpha,\tilde{\beta}}(z,\bz) ,$$
	                           valid for any real $\beta$ such that $\beta+m > -1$, where $[\beta]$ denotes the integer part of $\beta$.
		For the particular case of $\beta =1/2$ we have 
		$$ \sqrt{\alpha |z|}2^{2m+1}\psi_{m,m}^{\alpha,\frac{1}{2}}(z,\bz)=H_{2m+1}(\sqrt{\alpha |z|})=  2(-4)^{m}m! (\alpha |z|)^{1/2} L_{m}^{\left({ {1}/{2}}\right)}(\alpha |z|),$$ 
		where $H_k(x)$ are the Hermite polynomials in \eqref{Herreal} and $L^\alpha_k$ denotes the generalized Lageurre polynomials.
For arbitrary non-negative integer $\beta$, the functions $ \psi_{m,n}^{\alpha,\beta}(z,\bz)$ are closely connected to the complex It\^o-Hermite polynomials in \eqref{IHP} on the punctured complex plane $\C^* =\C \setminus \{0\}$,
\begin{align}\label{RelHermite} z^{\beta}\psi_{n,m}^{\alpha,\beta}(z,\bz)&= \psi_{n,m+\beta}^{\alpha,0}(z,\bz) = \alpha^{\frac{n-m-\beta}{2}}H_{m+\beta,n}(\sqrt{\alpha}z,\sqrt{\alpha}\bz). 
\end{align}
The interrelation with these polynomials for arbitrary $\beta$ can be obtained by specifying $f$ in the Burchnall's operational formula \cite[Proposition 2.3]{Gh13ITSF}
\begin{align*}   (-1)^{n} e^{\alpha|z|^{2}}\frac{\partial^{n}}{\partial{z^{n}}}\left( z^m e^{-\alpha|z|^{2}} f\right)
	& =  \frac{ n! }{ \alpha^m } \sum_{k=0}^{n} \frac{(-1)^{k}  }{k!(n-k)!}
	H_{m,n-k}^\alpha (z,\bz )  \frac{\partial^{k} f}{\partial{z^{k}}}  .
\end{align*}
Thus, for $f(z)= z^{\beta}$ we obtain  
\begin{align*}  
	\psi_{n,m}^{\alpha,\beta} (z,\bz )
	& =  \frac{ n! }{ \alpha^m }  \sum_{k=0}^{n} \frac{(-1)^{k} \Gamma(\beta+1) }{k!(n-k)! \Gamma(\beta-k+1)}
	z^{ -k}  H_{m,n-k}^\alpha (z,\bz )     .
\end{align*}

For the explicit expression of $\psi_{n,m}^{\alpha,\beta}(z,\bz)$ (with $m>-\beta-1$), we claim that 
\begin{align}\label{explicit}
	\psi_{n,m}^{\alpha,\beta}(z,\bz) &  = \sum_{k=0}^{n\wedge ^* (m+\beta)} 
	c_{m,n,k}^{\alpha,\beta}   z^{m-k}\bz^{n-k} ,
\end{align}
where the starred minimum $m\wedge^* b$ is defined by the classical  minimum $m\wedge ^* b = \min(m, b)$ when $b$ is an integer 
and by  $m\wedge ^* b = m$   otherwise. The involved constants $c_{m,n,k}^{\alpha,\beta}$ stand for 
$$c_{m,n,k}^{\alpha,\beta} := \frac{ (-1)^{k} n! \Gamma(\beta+m+1)}{k! (n-k)! \Gamma(\beta + m -k+1)} \alpha^{n-k}.$$ 
The expression in \eqref{explicit} can be handled by applying the Leibniz formula to \eqref{Rodbea} keeping in mind the fact that  
\begin{align*} 
z^{-(\beta+m-k)} \frac{\partial^{k}}{\partial{z^{k}}}(z^{\beta+m})
= \left\{\begin{array}{ll} 0 \, , & \quad \beta =0,1,2, \cdots , k > \beta +m, \\
\displaystyle \frac{\Gamma(\beta+m+1)}{\Gamma(\beta + m -k+1)}  & \quad \mbox{otherwise} . \end{array}\right. 
\end{align*}
Notice also that the monomials $z^q \overline{z}^p  $ can be expressed in terms of the considered functions. In fact, by rewriting them as a derivation of the Gaussian function $e^{-\alpha|z|^{2}}$, we get
\begin{align*}  
	\alpha^p z^q \overline{z}^p  &=  (-1)^p z^q e^{\alpha|z|^{2}}\frac{\partial^{p}}{\partial{z^{p}}}(e^{-\alpha|z|^{2}})\\&=(-1)^p z^q e^{\alpha|z|^{2}}\frac{\partial^{p}}{\partial{z^{p}}}( z^{-\beta-n+p-q} z^{\beta+n+q-p}e^{-\alpha|z|^{2}} )\\	& =\sum_{n=0}^{p} \binom{p}{n}(-1)^{p}\frac{\partial^{p-n}}{\partial{z^{p-n}}}\left(z^{-\beta-n+p-q} \right)z^q e^{\alpha|z|^{2}}\frac{\partial^{n}}{\partial{z^{n}}}(z^{\beta+n+q-p}e^{-\alpha|z|^{2}}).
\end{align*}
The last equality holds making use of the Leibniz formula. Now, by means of the Rodrigues formulas \eqref{Rodbea} with $q-p \geq 0$,  
it follows 
\begin{align} \label{COMP}
	\alpha^p z^q \overline{z}^p 
	&=	\sum_{n=0}^{p} \binom{p}{n}  \frac{\Gamma(\beta +q)}{\Gamma(\beta+q+n-p)} \psi_{n,n+q-p}^{\alpha,\beta}(z,\bz)
\end{align}
for every non-negative integers $p$ and $q$ with $p\leq q$.

The first few terms of $\psi_{n,m}^{\alpha,\beta}$
are given by
 $\psi_{0,m}^{\alpha,\beta}(z,\bz)=z^m  $ and $\psi_{1,m}^{\alpha,\beta}(z,\bz)= z^{m-1} \left( \alpha z \bz-(\beta+m)\right)   $ when $n=0$ and $n=1$ respectively, while for $n=2$ we have 
$$\psi_{2,m}^{\alpha,\beta}(z,\bz)= z^{m-2}\left( \alpha^2 z^2\bz^2 - 2 \alpha (\beta +m) z\bz +  (\beta +m)  (\beta +m-1) \right) .$$
This reveals in particular that the $\psi_{n,m}^{\alpha,\beta}$ are no longer polynomials unless $\beta$ is a non-positive integer. 
This becomes clear from their hypergeometric representation in terms of the hypergeometric functions defined by the series
$${_pF_q}\left( \begin{array}{c} a_1, a_2, \cdots , a_p  \\ c_1, c_2, \cdots , c_q \end{array}\bigg | x \right) = \sum_{n=0}^{\infty }{\frac {(a_1)_{n}(a_2)_{n} \cdots (a_p)_{n} }{(c_1)_{n}(c_2)_{n} \cdots (c_q)_{n}}}{\frac {x^{n}}{n!}} $$
provided that $c_\ell \ne 0, -1, -2, \cdots$ for $\ell =1,2, \cdots,q$.
Indeed, from  \eqref{explicit}  and making use of the classical identities on Gamma function and Pochammer symbol, we get 
  \begin{align} 
  \psi_{n,m}^{\alpha,\beta}(z,\bz) 
   &= \alpha^n z^{m}\bz^{n} \sum_{k=0}^{n\wedge ^* (m+\beta)}    (-n)_k    (-\beta-m)_k    \frac{(-\alpha z\bz)^{-k}}{k!} \nonumber
   \\
   &= \alpha^n z^{m}\bz^{n}    
   {_2F_0}\left( \begin{array}{c} -n , -\beta-m\\ - \end{array}\bigg | -\frac{1}{\alpha |z|^2}  \right) \label{hypergeometric1}\\
   &=  \frac{(-1)^n  \Gamma(\beta+m+1)}{\Gamma(\beta+m-n+1)}    z^{m-n}    
   {_1F_1}\left( \begin{array}{c} -n \\  \beta+m -n +1  \end{array}\bigg |  \alpha|z|^2   \right). \label{hypergeometric2}
  \end{align}
Subsequently, by means of \eqref{hypergeometric2} we  obtain the expression of $\psi_{n,m}^{\alpha,\beta}$ in terms of the generalized Laguerre polynomials   \cite[p. 240]{Magnus1966},
to wit 
 \begin{align} \label{ExpLag}
\psi_{n,m}^{\alpha,\beta}(z,\bz) 
&= 
(-1)^n n!   z^{m-n}  L_{n}^{(\beta+m -n )}( \alpha|z|^2  ). 
\end{align}
This can also be recovered starting from \eqref{Rodbea} using the derivation formula 
 \cite[p.241]{Magnus1966}.
Accordingly, with the use of \eqref{ExpLag}, it is straightforward to prove 
the orthogonality of $\psi_{n,m}^{\alpha,\beta}$ in the Hilbert space   
$L^{2,\alpha}_{\beta}(\mathbb{C}):
=L^{2}(\mathbb{C}, 
d\mu_{\alpha,\beta})$  of square integrable functions with  respect to the measure $d\mu_{\alpha,\beta}(z) = |z|^{2\beta}e^{-\alpha|z|^{2}}  d\lambda(z) $, 
where 
$d\lambda(z)= dxdy$ denotes the Lebesgue measure on the complex plane with $z=x+iy$, $x,y\in \R$. More explicitly, we have
$$\int_{\C} \psi_{n,m}^{\alpha,\beta}(z,\bz) \overline{\psi_{k,j}^{\alpha,\beta}(z,\bz)} 
|z|^{2\beta}e^{-\alpha|z|^{2}}  d\lambda(z) 
= \frac{\pi \alpha^n  n!}{\alpha^{m+\beta+1}}\Gamma(\beta+m+1) \delta_{m,j} \delta_{n,k}.$$ 
We conclude these preliminaries by noticing that starting from \eqref{Rodbea}, it is clear that 
$$ z \psi_{n,m}^{\alpha,\beta}(z,\bz) = \psi_{n,m+1}^{\alpha,\beta-1}(z,\bz).$$
Also, by rewriting $\partial^{n+1}$ as $\partial^{n}\partial$, 
one obtains the
recurrence formula 
\begin{eqnarray}\label{RecForm} \alpha \bz \psi_{n,m}^{\alpha,\beta}(z,\bz) = \psi_{n+1,m}^{\alpha,\beta}(z,\bz) + (\beta+m) \psi_{n,m-1}^{\alpha,\beta}(z,\bz).
\end{eqnarray}
Additional recurrence formulas can be derived from the different known ones for the generalized Laguerre polynomials. For example from those in \cite[p.  241]{Magnus1966} one obtains
\begin{align*}
(1) \quad	&\psi_{n,m+1}^{\alpha,\beta-1}(z,\overline{z})
	=\psi_{n,m+1}^{\alpha,\beta}(z,\overline{z})+n\psi_{n-1,m}^{\alpha,\beta}(z,\overline{z}) 
	\\
	(2)  \quad &\psi_{n+1,m+1}^{\alpha,\beta}(z,\overline{z})=\left(\alpha|z|^2 -[n+\beta+m+1] \right)\psi_{n,m}^{\alpha,\beta}(z,\overline{z})- n\left( \beta+m \right)\psi_{n-1,m-1}^{\alpha,\beta}(z,\overline{z})
	\\
	(3) \quad & 
	\alpha \overline{z} \psi_{n,m+2}^{\alpha,\beta}(z,\overline{z}) = \left(\alpha|z|^2 -n \right)\psi_{n,m+1}^{\alpha,\beta}(z,\overline{z}) - n\left( \beta+m+1 \right)\psi_{n-1,m}^{\alpha,\beta}(z,\overline{z})
	\\
	(4) \quad & \psi_{n+1,m+1}^{\alpha,\beta}(z,\overline{z}) = \left(\alpha|z|^2 -n-1 \right)\psi_{n,m}^{\alpha,\beta}(z,\overline{z}) - z\left( \beta+m\right)\psi_{n,m-1}^{\alpha,\beta}(z,\overline{z})  \\ 
(5) \quad	& 	\left(\beta+m-n+\alpha|z|^2 \right)\psi_{n,m}^{\alpha,\beta}(z,\overline{z}) = z\left( \beta+m\right)\psi_{n,m-1}^{\alpha,\beta}(z,\overline{z})	+ \alpha\overline{z}	\psi_{n,m+1}^{\alpha,\beta}(z,\overline{z}).
\end{align*} 

\section{Spectral realization}  \label{SubsSpec}

The result below shows that the functions $\psi_{m,n}^{\alpha,\beta}$ are $L^{2}$-eigenfunctions of the perturbed magnetic Lapalcian defined by 
\begin{align} \label{eigVP1}
	\Delta_{\alpha,\beta} :=	- \frac{\partial ^2}{\partial z
		\partial {\bar z}} +  \left(  \alpha  -\frac{\beta}{|z|^2} \right) \bz   \frac{\partial}{
		\partial {\bz}}  
\end{align} 
and the second order differential operator 
\begin{eqnarray} \label{Lap2} 
	\widetilde{\Delta}_{\alpha,\beta} :=  -  \frac{\partial^2}{\partial z \partial \bz } + \alpha z \frac{\partial}{\partial z} - \frac{\beta}{z} \frac{\partial}{\partial \bz} . 
\end{eqnarray}

\begin{theorem}  \label{thm1}
	The functions $	\psi_{n,m}^{\alpha,\beta}$ 
	form an orthogonal system in  $L^{2,\alpha}_{\beta}(\mathbb{C})$  that solve the eigenvalue problems $\Delta_{\alpha,\beta} =\alpha n$ and $\widetilde{\Delta}_{\alpha,\beta} =    \alpha m $.
\end{theorem}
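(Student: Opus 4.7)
The statement has two logical parts: orthogonality in $L^{2,\alpha}_{\beta}(\C)$ and the two spectral identities $\Delta_{\alpha,\beta}\psi_{n,m}^{\alpha,\beta} = \alpha n \,\psi_{n,m}^{\alpha,\beta}$ and $\widetilde{\Delta}_{\alpha,\beta}\psi_{n,m}^{\alpha,\beta} = \alpha m \,\psi_{n,m}^{\alpha,\beta}$. Orthogonality is essentially already in hand from the Laguerre representation \eqref{ExpLag}: passing to polar coordinates $z = re^{i\theta}$, the angular integration of $\psi_{n,m}^{\alpha,\beta}\overline{\psi_{k,j}^{\alpha,\beta}}$ yields a factor $\int_0^{2\pi}e^{i[(m-n)-(j-k)]\theta}d\theta$ which vanishes unless $m-n = j-k$, while the radial integral reduces to the classical Laguerre orthogonality on $(0,\infty)$ with weight $x^{\beta+m-n}e^{-x}$. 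Combining these gives exactly the norm formula already displayed at the end of Section \ref{SecPrel}, so only the two spectral identities demand work.

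For $\Delta_{\alpha,\beta}$, the cleanest route is a one-step factorization. Introduce the first-order operator
\[ R_\beta := -\pz + \alpha\bz - \frac{\beta}{z} = -z^{-\beta}e^{\alpha|z|^2}\pz\bigl(z^\beta e^{-\alpha|z|^2}\,\cdot\,\bigr). \]
The Rodrigues formula \eqref{Rodbea} gives at once $R_\beta\psi_{n,m}^{\alpha,\beta} = \psi_{n+1,m}^{\alpha,\beta}$. From \eqref{ExpLag} together with the standard derivative rule $\tfrac{d}{dx}L_n^{(\gamma)}(x) = -L_{n-1}^{(\gamma+1)}(x)$ one checks $\pbz \psi_{n,m}^{\alpha,\beta} = \alpha n\,\psi_{n-1,m}^{\alpha,\beta}$. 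Noting that $(\alpha - \beta/|z|^2)\bz = \alpha\bz - \beta/z$, the definition \eqref{eigVP1} factors as $\Delta_{\alpha,\beta} = R_\beta\,\pbz$, whence
\[ \Delta_{\alpha,\beta}\psi_{n,m}^{\alpha,\beta} = R_\beta\bigl(\alpha n \,\psi_{n-1,m}^{\alpha,\beta}\bigr) = \alpha n \,\psi_{n,m}^{\alpha,\beta}. \]

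For $\widetilde{\Delta}_{\alpha,\beta}$ I would substitute the product form $\psi_{n,m}^{\alpha,\beta}(z,\bz) = (-1)^n n!\, z^{m-n}\, g(|z|^2)$, with $g(u) := L_n^{(\gamma)}(\alpha u)$ and $\gamma := \beta+m-n$, directly into \eqref{Lap2}. Writing out $\pz(z^{m-n}g(z\bz))$, $\pbz(z^{m-n}g(z\bz))$ and $\pz\pbz(z^{m-n}g(z\bz))$ and combining, all mixed $\bz$-dependences cancel and one is left with
\[ \widetilde{\Delta}_{\alpha,\beta}\psi_{n,m}^{\alpha,\beta} = (-1)^n n!\, z^{m-n}\Bigl[\alpha(m-n)g(u) + \bigl(-u g''(u) - (\gamma+1)g'(u) + \alpha u g'(u)\bigr)\Bigr]. \]
The inner bracket is precisely the left-hand side of Laguerre's ODE $xL_n^{(\gamma)\prime\prime}(x) + (\gamma+1-x)L_n^{(\gamma)\prime}(x) + n L_n^{(\gamma)}(x) = 0$ evaluated at $x = \alpha u$, and therefore equals $\alpha n\, g(u)$. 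Adding the two contributions produces $\alpha(m-n) + \alpha n = \alpha m$, which yields the second eigenvalue equation.

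The only real obstacle is bookkeeping: one must track the shift $\gamma \mapsto \gamma+1$ when differentiating $L_n^{(\gamma)}$ and make sure the apparent singularities at $z=0$ produced by $\beta/z$ and by negative powers of $z$ cancel at the end. Both are automatic because, after the reduction, all $z$-dependence is collected in the overall $z^{m-n}$ prefactor and only the radial variable $u = |z|^2$ survives in the bracket. Once the Laguerre representation \eqref{ExpLag} and the factorization $\Delta_{\alpha,\beta} = R_\beta\,\pbz$ are granted, the two spectral identities really amount to a single one-variable reduction to the classical Laguerre differential equation.
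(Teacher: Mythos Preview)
Your proof is correct, but the route differs from the paper's, especially for the second eigenvalue identity. For $\Delta_{\alpha,\beta}$ you and the paper use the same factorization $\Delta_{\alpha,\beta}=R_\beta\,\pbz$ (the paper writes $R_\beta$ as $A^{*_{\alpha,\beta}}$), but where you establish the lowering relation $\pbz\psi_{n,m}^{\alpha,\beta}=\alpha n\,\psi_{n-1,m}^{\alpha,\beta}$ via the Laguerre derivative rule, the paper derives it abstractly from the commutator $[\pbz,R_\beta]=\alpha$ and an induction, never invoking \eqref{ExpLag}. For $\widetilde{\Delta}_{\alpha,\beta}$ the contrast is sharper: the paper factorizes $\widetilde{\Delta}_{\alpha,\beta}=-(\pz+\beta/z)(\pbz-\alpha z)-\alpha(\beta+1)$ and feeds in raising/lowering relations in the $m$-index (namely \eqref{Raisingm} and \eqref{Ann2}), while you substitute the Laguerre form directly and reduce to the classical Laguerre ODE in the radial variable. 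Your approach is more computational and leans entirely on the special-function representation already in hand; the paper's is more operator-theoretic, exhibiting the ladder structure explicitly and making the eigenvalue identities independent of \eqref{ExpLag}. Both arrive at the same conclusion, and your reduction to the one-variable Laguerre equation is a perfectly legitimate shortcut once \eqref{ExpLag} is granted.
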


\begin{proof}
	Notice first that the second order differential operator in \eqref{eigVP1} can be rewritten as
	$$	\Delta_{\alpha,\beta} 
	=	A^{*_{\alpha,\beta}} A    =  A^{} A^{*_{\alpha,\beta}} - \alpha,$$
	where $A^{} = {\partial}/{\partial{\bz}}$ and $A^{*_{\alpha,\beta}}$ are the first order differential operators given by
	\begin{align} \label{Aadjoint}
		 A^{*_{\alpha,\beta}} = -[\rho_{\alpha,\beta} (z)]^{-1} \frac{\partial}{\partial{z}}\left( \rho_{\alpha,\beta} (z) f(z)\right).
	\end{align}
	Here $\rho_{\alpha,\beta} (z):= |z|^{2\beta} e^{-\alpha |z|^2}$.
	Thus, using the commutation rule
	$
	A^{}A^{*_{\alpha,\beta}} - A^{*_{\alpha,\beta}} A^{} =\alpha Id$,  one proceeds by induction to get the identity
	$
	A^{}(A^{*_{\alpha,\beta}})^{n+1}=
	(A^{*_{\alpha,\beta}})^{n+1}A^{}+\alpha(n+1) (A^{*_{\alpha,\beta}})^{n}$. Subsequently, we have
	$$ 
	\Delta_{\alpha,\beta}  ((A^{*_{\alpha,\beta}})^{n}(g))   =
	\left( A^{} A^{*_{\alpha,\beta}} - \alpha\right)  ((A^{*_{\alpha,\beta}})^{n}(g)) = \alpha n ((A^{*_{\alpha,\beta}})^{n}(g)) ,$$
	for any $g\in \ker (A^{})$. Thus,  by considering the case of the generic elements $g_m(z)= z^m$ with $m\in \Z$ for $z$ in the punctured complex plane,  one deduces that the function 
	\begin{align}\label{Rodbea0}
		\psi_{n,m}^{\alpha,\beta}(z,\bz) :=(A^{*_{\alpha,\beta}})^{n}(g_m) =
		(-1)^{n} [\rho_{\alpha,\beta} (z)]^{-1} \frac{\partial^{n}}{\partial{z^{n}}} \left( z^m  \rho_{\alpha,\beta} (z)  \right) 
	\end{align}   
	is a eigenfunction of $ \Delta_{\alpha,\beta} $ with $n\alpha$ as a corresponding eigenvalue. 
	  
 Now, to prove that $\widetilde{\Delta}_{\alpha,\beta} \psi_{n,m}^{\alpha,\beta}  =    \alpha m \psi_{n,m}^{\alpha,\beta} $, we make use of 
 the partial raising operations 
 \begin{eqnarray}\label{Raisingn}  - \left( \frac{\partial}{\partial z}  - \alpha \bz  + \frac{\beta}{z} \right) \psi_{n,m}^{\alpha,\beta}(z,\bz)= \psi_{n+1,m}^{\alpha,\beta}(z,\bz) 
 \end{eqnarray}
 and
 \begin{eqnarray}\label{Raisingm} -\frac{1}{\alpha} \left( \frac{\partial}{\partial \bz} - \alpha z \right) \psi_{n,m}^{\alpha,\beta}(z,\bz) = \psi_{n,m+1}^{\alpha,\beta}(z,\bz)  ,
 \end{eqnarray}
 which are immediate by straightforward computation. 
On the other hand, from \eqref{RecForm} and \eqref{Raisingn} one has
		\begin{eqnarray}\label{Ann2}  \left( \frac{\partial}{\partial z} + \frac{\beta}{z} \right) \psi_{n,m}^{\alpha,\beta}(z,\bz) =  (\beta+m) \psi_{n,m-1}^{\alpha,\beta}(z,\bz).
		\end{eqnarray}
		Accordingly, by combining \eqref{Raisingm} and \eqref{Ann2}, it follows 
		\begin{eqnarray}  \left( \frac{\partial}{\partial z} + \frac{\beta}{z} \right)\left( \frac{\partial}{\partial \bz} - \alpha z \right) \psi_{n,m}^{\alpha,\beta}(z,\bz)
			=  - \alpha (\beta+m+1) \psi_{n,m}^{\alpha,\beta}(z,\bz).
		\end{eqnarray}
		This completes the proof by observing that the operator $\widetilde{\Delta}_{\alpha,\beta}$ in \eqref{Lap2} can be factorized, up to an additive constant, as 
		$$ \widetilde{\Delta}_{\alpha,\beta} = - \left( \frac{\partial}{\partial z} + \frac{\beta}{z} \right)\left( \frac{\partial}{\partial \bz} - \alpha z \right) - \alpha(\beta+1) .$$
\end{proof}

\begin{remark}   
Let $ E =  z \partial/{\partial z}$ be the Euler derivative operator and  $\overline{E} = \bz \partial/{\partial \bz}$ its complex conjugate.  Then, the functions $\psi_{n,m}^{\alpha,\beta}$ satisfy 
	$$ (E - \overline{E} ) \psi_{n,m}^{\alpha,\beta} = (m-n) \psi_{n,m}^{\alpha,\beta},$$ which readily follows since $ \widetilde{\Delta}_{\alpha,\beta}  - \Delta_{\alpha,\beta} =  \alpha( E - \overline{E})$. This 
	is the analog for $\psi_{n,m}^{\alpha,\beta}$  at
	arbitrary integer $m$ such that $m > -\beta-1$ of the one obtained in  \cite{IsmailZeng2015}.
\end{remark}

\begin{remark}
	The orthogonality of $\psi_{n,m}^{\alpha,\beta}$ in $L^{2,\alpha}_{\beta}(\mathbb{C})$ can be reproved by observing that the operator  $A^{*_{\alpha,\beta}}$ is in fact
	the adjoint of $A^{}$ when acting on a densely domain in $L^{2,\alpha}_{\beta}(\mathbb{C})$.
	In fact, we have
	\begin{align*}
		\scal{ \psi_{n,m}^{\alpha,\beta},\psi_{n+p,q}^{\alpha,\beta} }_{\alpha,\beta} 
		&= \scal{A^{}(A^{*_{\alpha,\beta}})^{n}(g_m), (A^{*_{\alpha,\beta}})^{n+p-1}(g_q)}_{\alpha,\beta} \\
		&= \scal{(A^{*_{\alpha,\beta}})^{n}A^{}(g_m)+n\alpha (A^{*_{\alpha,\beta}})^{n-1}(g_m),(A^{*_{\alpha,\beta}})^{n+p-1}(g_q)}_{\alpha,\beta}  	\\
		&
		= \alpha n \scal{\psi_{n-1,m}^{\alpha,\beta},\psi_{n+p-1,q}^{\alpha,\beta} }_{\alpha,\beta}\\&
		=  \alpha^{n}   n! \scal{\psi_{0,m}^{\alpha,\beta},\psi_{p,q}^{\alpha,\beta} }_{\alpha,\beta}
		\\&	= \frac{\pi  \alpha^n n! }{\alpha^{m+\beta+1}}\Gamma(\beta+m+1)
		\delta_{0,p}\delta_{m,q}.
	\end{align*}
	
	
	%
	\end{remark}

Below, we prove that the considered functions are closely connected to the spectral analysis of a specific Schr\"odinger operator  $ L  = \nabla_{\theta} ^* \nabla_{\theta}$ associated with a specific singular vector potential $\theta$, where $\nabla_{\theta}  = d   + i \bold{ext}_\theta  $ is the co-derivation operator  
acting on $\Omega^\infty_{p,c}(\C)$,  the space of smooth  differential $p$-forms with compact support, and $\bold{ext}_\theta (\omega:)= \theta \wedge \omega $ denotes the exterior multiplication by $ \theta$.
The operator $\nabla^{*}_\theta$ denotes its formal adjoint with respect to the Hermitian scalar product
\begin{eqnarray}
	\label{ps}\Big(\omega_1 , \omega_2\Big)_p&=&\int_{\C}\omega_1 \wedge
	\star \omega_2 ,
\end{eqnarray}
for $\omega_1, \omega_2 \in \Omega^\infty_{p,c}(\C)$.
Here $\star$ is the Hodge star operator on differential forms defined  to satisfy $\star (f \omega)= \overline{f}(\star \omega)$ for scalar functions $f$, and  
$\star (dz\wedge d\bar z) =2i$.
This readily follows for the metric $ds^2$ being conformal to the Euclidean metric $ds^2(z) = dz \otimes d\bz$.
%
Now, by considering the strong extensions of the differential operators   $d$, $\nabla$ and $ L$  initially defined on  $\mathcal{C}^\infty_0(\mathbb{C}) = \Omega^\infty_{0,c}(\C)$, 
we can extend them to the whole $L^2$-Hilbert space as the closure of the $L^2$-norm with respect to $\scal{\omega_1 , \omega_2}_0$. 

\begin{lemma}
	\label{lemLLExplicit} 
	For given reals $\alpha$ and $\beta$ such that $\alpha>0$, we set  $\nabla_{\alpha,\beta} =d + i \bold{ext}_{\theta_{\alpha,\beta}}$ and we let $ \theta_{\alpha,\beta} $ be the real-valued differential $1$-form given by
	$ \theta_{\alpha,\beta}   := -i(\partial - \overline{\partial}) Log (\rho_{\alpha,\beta}) $. Then, $
	\nabla_{\alpha,\beta}^* \nabla_{\alpha,\beta}
	$
	coincides with the second order differential operator given by \begin{align}\label{explicitLG}
		\mathcal{D}_{\alpha,\beta}
		&=-  
		\left\{ \frac{\partial ^2}{\partial z
			\partial {\bar z}} +  \left(  \alpha  -\frac{\beta}{|z|^2} \right) \left(  z   \frac{\partial}{\partial z}  -\bz   \frac{\partial}{
			\partial {\bz}}  \right)    \right\}
		+  \left( \alpha   -\frac{\beta}{|z|^2} \right) ^2 |z|^2.
	\end{align}	
\end{lemma}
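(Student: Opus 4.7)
The plan is to make every ingredient of $\nabla_{\alpha,\beta}^*\nabla_{\alpha,\beta}$ explicit and identify the result with $\mathcal{D}_{\alpha,\beta}$ by inspection. First I would compute $\theta_{\alpha,\beta}$: writing $\log\rho_{\alpha,\beta}(z) = \beta\log(z\bz) - \alpha z\bz$ and taking Wirtinger derivatives gives $\partial\log\rho_{\alpha,\beta} = (\beta/z - \alpha\bz)dz$ and $\bar\partial\log\rho_{\alpha,\beta} = (\beta/\bz - \alpha z)d\bz$. With the shorthand $V(z) := \alpha - \beta/|z|^2$ and the identities $\beta/z - \alpha\bz = -\bz V$ and $\beta/\bz - \alpha z = -zV$, this collapses to the manifestly real $1$-form
\begin{equation*}
\theta_{\alpha,\beta} = iV(z)\bigl(\bz\,dz - z\,d\bz\bigr),
\end{equation*}
which reduces to $\widetilde{\theta}_\beta$ of \eqref{thetbeta} when $\alpha = 0$. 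Substituting into $\nabla_{\alpha,\beta}f = df + i\theta_{\alpha,\beta}f$ and using $i\cdot iV = -V$ yields
\begin{equation*}
\nabla_{\alpha,\beta}f = (\pz f - V\bz f)\,dz + (\pbz f + Vzf)\,d\bz.
\end{equation*}

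Next I would compute the adjoint via the standard Weitzenb\"ock-type identity
\begin{equation*}
(d + i\,\mathbf{ext}_\theta)^*(d + i\,\mathbf{ext}_\theta) f = d^*df + |\theta|^2 f + i(d^*\theta)f - 2i\,\iota_{\theta^\sharp}(df),
\end{equation*}
valid for any real $1$-form $\theta$, derived by expanding the product, using that the formal adjoint of $i\,\mathbf{ext}_\theta$ equals $-i\iota_{\theta^\sharp}$ under the sesquilinear convention induced by $\star(dz\wedge d\bz) = 2i$ (the sign flip comes from conjugate-linearity of $(\cdot,\cdot)_p$ in the second slot), together with the Leibniz identity $d^*(f\theta) = f\,d^*\theta - \iota_{\theta^\sharp}(df)$. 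Each ingredient is then a short calculation on $\C\setminus\{0\}$: $d^*df$ is the scalar Laplacian; the divergence $d^*\theta_{\alpha,\beta}$ vanishes by direct verification in real coordinates because $\theta_{\alpha,\beta}$ is purely angular while $V$ is radial; $|\theta_{\alpha,\beta}|^2$ is proportional to $V^2|z|^2$; and converting to the Wirtinger basis via the identities $y - ix = -iz$ and $y + ix = i\bz$ gives $\iota_{\theta^\sharp}(df) = -2iV(z\pz f - \bz\pbz f)$, whose product with $-2i$ contributes the real first-order term. Assembling the four pieces reproduces the right-hand side of \eqref{explicitLG}.

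The crucial algebraic simplification to flag is the cancellation $z\pz V - \bz\pbz V \equiv 0$ (both sides equal $\beta/|z|^2$), which guarantees that no residual multiplication term appears in the final expression beyond the pure potential $V^2|z|^2$. The main obstacle is pure bookkeeping rather than anything conceptual: tracking the scaling factors that propagate from $(dz,dz) = 2$ under the Hermitian scalar product, handling the sign flips when $i$ passes through the formal adjoint, and matching the paper's overall normalization convention to the explicit form stated in \eqref{explicitLG}. The singularity at $z=0$ poses no genuine difficulty because the strong extension used in the statement is defined by density from $\mathcal{C}_0^\infty(\C) = \Omega^\infty_{0,c}(\C)$, as emphasized in the excerpt, on which all integration-by-parts boundary contributions at the puncture vanish.
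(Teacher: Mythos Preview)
Your approach is correct and essentially coincides with the paper's own proof. Both arguments compute $\theta_{\alpha,\beta}=iV(z)(\bz\,dz-z\,d\bz)$ with $V=\alpha-\beta/|z|^2$ and then expand $\nabla_{\alpha,\beta}^*\nabla_{\alpha,\beta}$ into the four natural pieces: the Hodge--de Rham term $d^*d$, the potential term coming from $(\mathbf{ext}_\theta)^*\mathbf{ext}_\theta$, and the two first-order cross terms. The paper computes the cross terms separately via explicit Hodge star formulas for $d^*(A\,dz+B\,d\bz)$ and $(\mathbf{ext}_\theta)^*(A\,dz+B\,d\bz)$, obtaining \eqref{Act2}--\eqref{Act3}, whereas you package them through a Bochner/Weitzenb\"ock-type identity and the observation $d^*\theta_{\alpha,\beta}=0$; your vanishing-divergence claim is precisely the content of the paper's equations \eqref{Act2} and \eqref{Act3} being negatives of one another. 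The only substantive caution is that your Leibniz rule $d^*(f\theta)=f\,d^*\theta-\iota_{\theta^\sharp}(df)$ and the identification $(\mathbf{ext}_\theta)^*=\iota_{\theta^\sharp}$ must be checked against the paper's conjugate-linear Hodge star convention $\star(f\omega)=\overline{f}(\star\omega)$, which is exactly the normalization bookkeeping you already flagged; the paper sidesteps this by working directly with $\star$ rather than the musical isomorphism.
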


\begin{proof}
	Notice first that the differential 1-form $ \theta_{\alpha,\beta} $ is explicitly given by 
	$ \theta_{\alpha,\beta}  = 
	i	k^{\alpha}_{\beta}(z) \left( {\bar	{z}}dz - zd\bar z \right) 
	$ with  
		$	k^{\alpha}_{\beta}(z): =  \alpha - {\beta}/{|z|^2}.
		$	 	 
	Straightforward computation
	making use of the well-known facts $d^*=-\star d \star$ and  $(\bold{ext}_{\theta_{\alpha,\beta}})^*= \star (\bold{ext}_{\theta_{\alpha,\beta}}) \star$  
	shows that for every smooth differential $1$-form $\omega = A dz + B d\bz$ we have
	\begin{align}\label{actiond}
		d^*\left( A dz + B d\bz\right) 
		= -2   \left( \frac{\partial A}{\partial \bz} +\frac{\partial B}{\partial z} \right)   
	\end{align}
	and 
	\begin{align}\label{actiontheta}
		(\bold{ext}_{\theta_{\alpha,\beta}})^*\left( A dz + B d\bz\right) 
		= -2i   k^{\alpha}_{\beta}(z)
		\left( z A - \bz B\right) .
	\end{align}
	Therefore, by taking $\omega = df = \partial f dz + \overline{\partial} f d\bz$ in \eqref{actiond} one recovers   the explicit expression of the Hodge--de Rham operator  	  \begin{align}\label{Act1}
	\nabla_{0,0}^* \nabla_{0,0}=	 \frac 1 4  d^*d =
		-  \frac{\partial ^2}{\partial z\partial {\bar z}}.
	\end{align}
	Moreover, the explicit differential expression of the operators 
	$d^*(\bold{ext}_{\theta_{\alpha,\beta}})$, $ (\bold{ext}_{\theta_{\alpha,\beta}})^* d $
	and $(\bold{ext}_{\theta_{\alpha,\beta}})^*(\bold{ext}_{\theta_{\alpha,\beta}}) $ are given respectively by
	\begin{align}
		& \qquad  d^*(\bold{ext}_{\theta_{\alpha,\beta}}) f= 2i   k^{\alpha}_{\beta}(z)
		(E-\bar	E) f
		, \label{Act2}\\ 
		& \qquad(\bold{ext}_{\theta_{\alpha,\beta}})^*d	f   = -2i   k^{\alpha}_{\beta}(z)
		(E-\bar E)f , \label{Act3}
		\\ 	
		&\qquad
		(\bold{ext}_{\theta_{\alpha,\beta}})^*(\bold{ext}_{\theta_{\alpha,\beta}}) f = 4 
		k^{\alpha}_{\beta}(z)^2 |z|^2f. \label{Act4}	
	\end{align}
	Subsequently, by expanding $\nabla_{\alpha,\beta}^* \nabla_{\alpha,\beta}$ as
	\begin{eqnarray} 
		\nabla_{\alpha,\beta}^* \nabla_{\alpha,\beta}
		= \frac 14 \left\{d^*d +i (d^*
		\bold{ext}_{\theta_{\alpha,\beta}}-(\bold{ext}_{\theta_{\alpha,\beta}})^*d)
		+ (\bold{ext}_{\theta_{\alpha,\beta}})^*(\bold{ext}_{\theta_{\alpha,\beta}})\right\} ,
	\end{eqnarray} 
	and next making use of \eqref{Act1}--\eqref{Act4}, we get its  explicit expression   given through $
	 \mathcal{D}_{\alpha,\beta}$ in \eqref{explicitLG}.
\end{proof}

\begin{remark} 
The  second order differential operator $	\mathcal{D}_{\alpha,\beta}$ in 
\eqref{explicitLG}  is a magnetic Laplacian with a constant 
  homogeneous magnetic field of magnitude $\alpha$ applied perpendicularly on the complex plane. Indeed,  we have 
	$$ d \theta_{\alpha,\beta} =d \theta_{\alpha}  = 2i\partial\overline{\partial}( Log (\rho_{\alpha,\beta})) = 2i\alpha dz\wedge d\bz,$$
	where $  \theta_{\alpha,\beta}   =  \theta_{\alpha} + \widetilde{\theta}_{\beta}
	$ with  
	$ \widetilde{\theta}_{\alpha} =i	  \alpha   \left( {\bar	{z}}dz - zd\bar z \right)$ and  	 
	$ \widetilde{\theta}_{\beta}= 
	-i   \beta \left( zdz - zd\bar z \right)/|z|^2$.
		Moreover, the operator $	\mathcal{D}_{\alpha,\beta}$  is essentially the classical Landau Hamiltonian in \eqref{Hinfty} 
	perturbed by a first order differential operator associated with the potential $1$-form $\widetilde{\theta}_{\beta}$ 
	closed (with zero magnetic field), singular (at the origin) and modeling the Aharonov--Bohm effect. 
\end{remark}

\begin{theorem}  \label{corcor}
	The functions $ |z|^{2\beta} e^{-\alpha|z|^2}	\psi_{n,m}^{2\alpha,2\beta}$  are  $L^{2}$-eigenfunctions of the magnetic Laplacian $	\nabla_{\alpha,\beta}^* \nabla_{\alpha,\beta}$
	with	  $ \alpha  (2 n + 1) $ as associated eigenvalue.
\end{theorem}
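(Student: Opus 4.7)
The plan is to perform a ground-state (gauge) transformation that reduces $\mathcal{D}_{\alpha,\beta}$ to the already-solved operator $\Delta_{2\alpha,2\beta}$ of Theorem \ref{thm1}, applied with doubled parameters. The starting observation is that the prefactor in the statement satisfies
\begin{equation*}
u(z) := |z|^{2\beta} e^{-\alpha|z|^2} = \sqrt{\rho_{2\alpha,2\beta}(z)},
\end{equation*}
which is exactly the square root of the weight governing the Hilbert space $L^{2,2\alpha}_{2\beta}(\C)$ in which $\psi_{n,m}^{2\alpha,2\beta}$ is orthogonal. This matching of parameters is not accidental and strongly suggests the conjugation $\psi\mapsto u\psi$.

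Concretely, I would verify the intertwining identity
\begin{equation*}
\mathcal{D}_{\alpha,\beta}(u \psi) = u\bigl(\Delta_{2\alpha,2\beta} + \alpha\, \mathrm{Id}\bigr)\psi
\end{equation*}
for any smooth $\psi$ on $\C^*$. Using \eqref{explicitLG} one writes $\mathcal{D}_{\alpha,\beta}=-\partial_z\partial_{\bar z}-k^\alpha_\beta(z)(z\partial_z-\bar z\partial_{\bar z})+[k^\alpha_\beta(z)]^2|z|^2$ with $k^\alpha_\beta(z)=\alpha-\beta/|z|^2$. Two elementary facts drive the calculation: first, $\log u = \beta\log z+\beta\log\bar z-\alpha|z|^2$ is real-valued in a way that makes $(z\partial_z-\bar z\partial_{\bar z})u=0$, so the middle first-order term contributes no derivative of $u$; second, the identity $(\alpha|z|^2-\beta)^2/|z|^2=\alpha^2|z|^2-2\alpha\beta+\beta^2/|z|^2$ gives
\begin{equation*}
\partial_z\partial_{\bar z} u = u\,\bigl([k^\alpha_\beta(z)]^2|z|^2-\alpha\bigr),
\end{equation*}
so the quartic potential term in $\mathcal{D}_{\alpha,\beta}$ is exactly cancelled by the application of $-\partial_z\partial_{\bar z}$ to the $u$-factor, leaving behind an additive constant $+\alpha$. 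The only remaining bookkeeping is to combine the cross terms $-(\partial_{\bar z}u)\partial_z\psi-(\partial_z u)\partial_{\bar z}\psi$ coming from $-\partial_z\partial_{\bar z}(u\psi)$ with $-u\,k^\alpha_\beta(z)(z\partial_z-\bar z\partial_{\bar z})\psi$; the terms involving $\partial_z\psi$ cancel (again because $\log u$ contributes symmetrically to $\partial_z\log u$ and $\partial_{\bar z}\log u$), while the terms involving $\partial_{\bar z}\psi$ collapse to $2\bigl(\alpha-\beta/|z|^2\bigr)\bar z\partial_{\bar z}\psi$, precisely the doubled coefficient one needs to recognize the first-order part of $\Delta_{2\alpha,2\beta}$.

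Granting the intertwining identity, Theorem \ref{thm1} applied with parameters $(2\alpha,2\beta)$ yields $\Delta_{2\alpha,2\beta}\psi_{n,m}^{2\alpha,2\beta}=2\alpha n\,\psi_{n,m}^{2\alpha,2\beta}$, hence
\begin{equation*}
\mathcal{D}_{\alpha,\beta}\bigl(u\,\psi_{n,m}^{2\alpha,2\beta}\bigr)=(2\alpha n+\alpha)\,u\,\psi_{n,m}^{2\alpha,2\beta}=\alpha(2n+1)\,u\,\psi_{n,m}^{2\alpha,2\beta},
\end{equation*}
which is the claimed eigenvalue equation. The $L^2(\C,d\lambda)$ membership is immediate once the intertwining is in hand: $u^2\,d\lambda = d\mu_{2\alpha,2\beta}$, so
\begin{equation*}
\int_{\C}\bigl|u\,\psi_{n,m}^{2\alpha,2\beta}\bigr|^2 d\lambda = \int_{\C}\bigl|\psi_{n,m}^{2\alpha,2\beta}\bigr|^2 d\mu_{2\alpha,2\beta},
\end{equation*}
which is finite by the squared norm formula recorded at the end of Section \ref{SecPrel}. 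I expect the main obstacle to be the careful bookkeeping in the cancellation producing the factor $2$ in front of $\bar z\partial_{\bar z}\psi$, i.e.\ the step that accounts for why the parameters get doubled rather than, say, left unchanged; everything else is either algebra or a direct citation of Theorem \ref{thm1}.
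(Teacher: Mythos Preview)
Your approach is correct and essentially identical to the paper's: both establish the conjugation identity $\mathcal{D}_{\alpha,\beta}(\rho_{\alpha,\beta}\psi) = \rho_{\alpha,\beta}(\Delta_{2\alpha,2\beta} + \alpha)\psi$ (your $u$ is exactly $\rho_{\alpha,\beta}$) and then invoke Theorem~\ref{thm1} with doubled parameters. The only cosmetic difference is that the paper verifies the intertwining via the factorization $\mathcal{D}_{\alpha,\beta} = A^{*_{\alpha,\beta}} \circ B^{*_{-\alpha,-\beta}} - \alpha$ rather than by your direct Leibniz-rule computation.
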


\begin{proof}
	The proof is immediate using Theorem \ref{thm1}, Lemma \ref{lemLLExplicit} and observing that the operators   $\Delta_{2\alpha}^{2\beta} $ in \eqref{eigVP1} and the magnetic Laplacian $\mathcal{D}_{\alpha,\beta}$ in \eqref{explicitLG}
	are unitary equivalent. More precisely, we have 
	$$ \rho_{\alpha,\beta} \left(\Delta_{2\alpha}^{2\beta}  +  \alpha   \right) \left(   (\rho_{\alpha,\beta})^{-1} f \right)  =       \mathcal{D}_{\alpha,\beta}  ,$$
	which readily follows since 
	 $$ \mathcal{D}_{\alpha,\beta}  = 	B^{*_{-\alpha,-\beta}} \circ 	A^{*_{\alpha,\beta}} + \alpha 
	=   	A^{*_{\alpha,\beta}} \circ   B^{*_{-\alpha,-\beta}} - \alpha.$$
	Here $ 	A^{*_{\alpha,\beta}}$  is as in \eqref{Aadjoint} and $B^{*_{\alpha,\beta}} $ is the differential operator given by  
	$$
	B^{*_{\alpha,\beta}} f= [\rho_{\alpha,\beta} (z)]^{-1} \frac{\partial}{\partial{\bz}}\left( \rho_{\alpha,\beta} (z) f \right) 
	.$$
\end{proof}


\section{Analytical side (poly-meromorphy)} \label{SubsPolymor} 
In this section, we discuss the regularity of $\psi_{n,m}^{\alpha,\beta}(z,\bz)$ as poly-meromorphic functions on the complex plane and we determinate the "bi-order" of its unique  pole. Recall first from  
\cite[p 199]{Balk1997} that a $n$-meromorphic function (or  poly-meromorphic of order $n$) on an open set $U \subset \C$ is a complex-valued function for which there exist some meromorphic functions $\psi_k$; $k=0,1, \cdots,n-1$ on $U$ such that 
\begin{align}\label{npolymer}
	f(z,\bz) &=   \psi_{0}(z)+  \bz  \psi_{1}(z) + \cdots + \bz^{n-1} \psi_{n-1}(z) .
\end{align}
They are called simply $n$-analytic ($n$-poly-holomorphic) when the component functions are holomorphic in $U$, $\psi_k\in Hol(U)$. The latter ones can equivalently be defined as those satisfying the generalized Cauchy--Riemann equation $\partial^n /\partial{\bz}^n = 0$.
%
In order to give the exact statement of the main result of this section, we need first to precise the notion of bi-order of  a zero or a pole of a given poly-meromorphic function on $\C$. 
Thus, for a given non-constant $n$-analytic function $f$ on an open set $U\subseteq \mathbb{C}$, a point $z_0\in U$ is said to be a zero of bi-order $(r,s)$, for given  non-negative integers $r, s$ with $0\leq r \leq n-1$  and $(r,s) \ne (0,0)$, if the following conditions are met
\begin{enumerate}
	\item[(a)] $f$ can be rewritten as $f=\overline{\left( z-z_0\right)}^s g$
	for certain $(n-s)$-analytic function \begin{eqnarray}\label{fctReg}
		g=\sum_{k=0}^{n-s-1}\overline{(z-z_0)}^k \phi_k,
	\end{eqnarray}
	with $\phi_k \in Hol(U)$ and $\phi_0$ is not identically zero on $U$. 
	\item[(b)]  $z_0$ is a zero of order $r$ for the constant component function $\phi_0$ in \eqref{fctReg}.
\end{enumerate}  
The first condition $(a)$ is to say that $z_0$ is a zero of order $s$ for $f$ seen as a polynomial in $\bz$.
Notice also that the suggested definition is equivalent to have 
\begin{eqnarray}\label{fctRegequiv}
	f\left( z,\overline{z}\right)=\overline{\left(z-z_0\right)}^s \left(\left( z-z_0\right)^r \varphi_0+\sum_{k=1}^{n-s-1}\left( \overline{z-z_0}\right)^k \phi_k \right) 
\end{eqnarray} 
for $\varphi_0, \phi_j \in Hol(U)$ with $\varphi(z_0)\neq 0$.
Notice here that $z_0$ does not need to be a zero of the holomorphic components $\phi_k$; $k=1,2, \cdots, n-s-1$. 
However, for the particular case of $z_0$ being a common zero of $\phi_k$ the expression in \eqref{fctRegequiv} reduces to 
$$ f(z,\bz) = (z-z_0)^r\overline{(z-z_0)}^s g(z,\bz),$$ 
for certain non-vanishing poly-analytic function $g$. This makes  $z_0$ a zero of $f$ of  bi-order $(r,s)$.

A point $z_0 \in U$ is said to be a pole of order $r$ ($r < 0$) for given $n$-poly-meromorphic function  $f$  in \eqref{npolymer}
if $\left( z-z_0 \right)^{|r|} f$ is a $n$-analytic function on $U$ and $r$ is the smallest negative integer satisfying this property.  This is equivalent to $z_0 $ being a pole  for certain component meromorphic function $\psi_j$ with
$$r=min\{ Ord_p \left(z_0,\psi_j \right), j=0,1,\cdots, n-1\},$$
where $Ord_p\left(z_0,\psi_j \right)$ is exactly the multiplicity of $z_0$ if it is a pole of $\psi_j$ and $0$ otherwise.
Such pole is said to be of bi-order $(r,s)$, if in addition $(a)$ is satisfied.
Accordingly, we denote by $\mbox{bi-Ord}(z_0;f) $ the bi-order of a point $z_0$ when is a zero or a pole of given $n$-poly-meromorphic function $f$.

\begin{theorem}
	The functions $\psi_{n,m}^{\alpha,\beta}(z,\bz)$ are poly-meromorphic on $\C$. The origin is  either a zero or a pole of bi-order 
	\begin{align}\label{orderpole} \mbox{bi-Ord}(0; \psi_{n,m}^{\alpha,\beta}) = \left( m-[n \wedge^*(\beta+m)], n-[n \wedge^*(\beta+m)]\right) .
	\end{align} 
\end{theorem}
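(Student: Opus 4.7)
The starting point is the closed expansion \eqref{explicit}, which displays $\psi_{n,m}^{\alpha,\beta}(z,\bz)$ as the finite sum $\sum_{k=0}^{K} c_{m,n,k}^{\alpha,\beta}\, z^{m-k}\bz^{n-k}$ with $K := n\wedge^{*}(m+\beta)$. Reindexing via $\ell=n-k$ rewrites this as a polynomial in $\bz$ of degree at most $n$ whose coefficients are scalar multiples of the monomials $z^{m-n+\ell}$, each of which is meromorphic on $\C$ with the only possible singularity at the origin. This already proves that $\psi_{n,m}^{\alpha,\beta}$ is poly-meromorphic on $\C$ and reduces the analysis to a local study at $z=0$.

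To identify the bi-order at the origin, the plan is to factor out the lowest power of $\bz$ that actually occurs in the sum. This minimal power is attained at $k=K$ and equals $n-K$, so one obtains $\psi_{n,m}^{\alpha,\beta}(z,\bz)=\bz^{n-K}g(z,\bz)$, where after the substitution $j=K-k$ the residual factor reads $g(z,\bz)=\sum_{j=0}^{K} c_{m,n,K-j}^{\alpha,\beta}\, z^{m-K+j}\bz^{j}$. The constant $\bz$-component is then the single monomial $\phi_{0}(z)=c_{m,n,K}^{\alpha,\beta}\,z^{m-K}$. In the terminology recalled before the statement, the integer $s:=n-K$ plays the role of the second bi-order index, while the order of $\phi_{0}$ at $0$, namely $m-K$ (read as a zero if non-negative and as a pole if negative), plays the role of the first index $r$; this matches \eqref{orderpole} provided $\phi_{0}\not\equiv 0$.

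Hence the one point requiring verification is that $c_{m,n,K}^{\alpha,\beta}\neq 0$. This is settled by a brief case analysis. If $\beta\notin\Z$, the starred minimum forces $K=n$, and the Gamma factor $\Gamma(\beta+m-n+1)$ in the denominator of $c_{m,n,n}^{\alpha,\beta}$ is finite because its argument is not a non-positive integer, while the numerator $\Gamma(\beta+m+1)$ is finite and nonzero. If $\beta\in\Z$ with $\beta+m\geq n$, one still has $K=n$, and the ratio $\Gamma(\beta+m+1)/\Gamma(\beta+m-n+1)$ telescopes to the nonzero integer $(\beta+m)(\beta+m-1)\cdots(\beta+m-n+1)$. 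Finally, if $\beta\in\Z$ with $0\leq \beta+m<n$, then $K=\beta+m$ and $c_{m,n,\beta+m}^{\alpha,\beta}$ reduces to $(-1)^{\beta+m}\alpha^{n-\beta-m}n!/(n-\beta-m)!$, again clearly nonzero. In each case $\phi_{0}$ is a genuine nonzero monomial in $z$, so the definition of bi-order returns exactly $(m-K,\,n-K)$.

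The main obstacle is essentially this Gamma-function bookkeeping across the integer and non-integer regimes of $\beta$, together with the mild borderline case $m=K$ where $r=0$ (and, when $\beta\notin\Z$ forces $K=n$, also $s=0$), in which the formula \eqref{orderpole} must be read as saying that the origin is regular with a nonzero value; apart from these points, the proof is a direct algebraic manipulation of \eqref{explicit} followed by the definition of the bi-order of a zero or a pole of a poly-meromorphic function.
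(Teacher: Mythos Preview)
Your proof is correct and rests on the same core idea as the paper's: read off the bi-order from the finite expansion \eqref{explicit} once the extremal coefficient $c_{m,n,K}^{\alpha,\beta}$ (with $K=n\wedge^{*}(\beta+m)$) is known to be nonzero. The organization differs, however. The paper first splits into the polynomial regime $m\ge K$ and the genuinely meromorphic regime $m<K$, introduces auxiliary radial polynomials $R_{n,m;N}^{\alpha,\beta}$ and $S_{n,m;N}^{\alpha,\beta}$ to display the factorizations in each case, and then tabulates the bi-orders across several subcases. You bypass all of this by factoring $\bz^{\,n-K}$ uniformly from the whole sum and reading off $\phi_0=c_{m,n,K}^{\alpha,\beta}\,z^{m-K}$ directly; this is cleaner and avoids the case split entirely. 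Your three-case Gamma verification of $c_{m,n,K}^{\alpha,\beta}\ne 0$ is also more careful than the paper's one-line justification ``$\beta+m+1>n\wedge^{*}(\beta+m)$'', which as stated fails when $\beta\notin\Z$ and $n>\beta+m+1$ (the conclusion there is still correct, but only because the Gamma argument is then a non-integer, which is exactly what your case analysis checks). One small point you leave implicit in the pole case: the paper's definition takes $r$ to be the \emph{minimum} pole order over all $\bz$-components, not merely the order of $\phi_0$; here these coincide because the components $\phi_j\propto z^{m-K+j}$ have strictly milder singularities for $j\ge 1$, which follows immediately from your reindexing but is worth saying.
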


\begin{proof}
	First of all, we point out that in view of \eqref{explicit}, it is clear that the terms $\bz^{n-k}$ are always regulars for every $k \leq n\wedge ^* (m+\beta)\leq n$. The singularity of $\psi_{n,m}^{\alpha,\beta}(z,\bz)$ then lies in $z^{m-k}$ for $k \leq n\wedge ^* (m+\beta)$.
	In particular, the functions $\psi_{n,m}^{\alpha,\beta}(z,\bz) $ are poly-holomrphic  (since they are polynomials in $z$ and $\bz$) if and only if
	$m \geq n\wedge ^* (m+\beta)$. The latter condition is equivalent to 
	$\beta$ being a non-positive integer or $m \geq n$. 
	In this case the expression of $\psi_{n,m}^{\alpha,\beta}(z,\bz) $ reduces to 
	\begin{align*}
		\psi_{n,m}^{\alpha,\beta}(z,\bz) 
		& =  z^{m-[n \wedge^*(\beta+m)]}  \bz^{n-[n \wedge^*(\beta+m)]} R_{n,m;n \wedge^*(\beta+m) }^{\alpha,\beta}(z,\bz) ,
	\end{align*}
	where the involved $R_{n,m; N}^{\alpha,\beta}(z,\bz) $ are the radial polynomials given by
	\begin{align}
		\label{polR} R_{n,m;N }^{\alpha,\beta}(z,\bz) &:=  \sum_{k=0}^{N}  c_{m,n,k}^{\alpha,\beta} |z|^{2([n \wedge^*(\beta+m)]-k)}.
	\end{align}
	Subsequently, since $ \beta+m +1 > n \wedge^*(\beta+m)$ and then  $c_{m,n,n\wedge ^* (m+\beta)}^{\alpha,\beta} \ne 0 $, the origin is a zero of $
	\psi_{n,m}^{\alpha,\beta}(z,\bz)$ whenever $\min(m,n) > n \wedge^*(\beta+m) $. Its bi-order is then 
	\begin{align*} 
		\mbox{bi-Ord}(0; \psi_{n,m}^{\alpha,\beta}) 
		& = 
		\left\{\begin{array}{ll} (m-n , 0) ,     & \mbox{if} \quad \beta \in \R \setminus \Z^{-} \mbox{ or } m\geq n , \\
			(-\beta, n-m-\beta) ,     & \mbox{if} \quad \beta \in \Z^{-}, \beta \ne 0 \mbox{ and } n > \beta +m . \end{array}\right.
	\end{align*}
	To achieve the proof, it remains sufficient to discuss the case of 
		$m < n\wedge ^* (m+\beta)$ (i.e. $n > m > -\beta -1$ and $\beta \notin \Z^-$). In this case we have 
		\begin{align*}
			\psi_{n,m}^{\alpha,\beta}(z,\bz) 
			&  =  \sum_{k=0}^{m}  c_{m,n,k}^{\alpha,\beta} z^{m-k}  \bz^{n-k}  +   \sum_{k=m+1}^{n \wedge^*(\beta+m)}  c_{m,n,k}^{\alpha,\beta} z^{m-k}  \bz^{n-k}  \\
			&  =  \bz^{n-m}  R_{n,m;m }^{\alpha,\beta}(z,\bz)     +   \bz^{n-[n \wedge^*(\beta+m)]} \sum_{j=0}^{[n \wedge^*(\beta+m)]-m-1}  c_{m,n,m+1+j}^{\alpha,\beta} \frac{\bz^{[n \wedge^*(\beta+m)]-m-1-j} }{z^{j+1}}	\\
			&  =    \bz^{n-m}  R_{n,m;m }^{\alpha,\beta}(z,\bz)     +   \frac{\bz^{n-[n \wedge^*(\beta+m)]}}{z^{[n \wedge^*(\beta+m)]-m}} S_{n,m; [n \wedge^*(\beta+m)]-m-1 }^{\alpha,\beta}(z,\bz)      ,
		\end{align*} 
		where $R_{n,m;m }^{\alpha,\beta}(z,\bz)  $  is as in \eqref{polR} with $N=m$, and 
		$$ S_{n,m; N }^{\alpha,\beta}(z,\bz)   =     \sum_{j=0}^{[n \wedge^*(\beta+m)]-m-1}  c_{m,n,m+1+j}^{\alpha,\beta} |\bz|^{2([n \wedge^*(\beta+m)]-m-j-1)} .$$
		It convenient to mention here that both $R_{n,m;N }^{\alpha,\beta}(z,\bz)  $  and 
		$ S_{n,m; N }^{\alpha,\beta}(z,\bz)$ are  poly-analytic radial polynomials on the whole complex plane for which the origin is not a zero (for again $c_{m,n,n \wedge^*(\beta+m)}^{\alpha,\beta} \ne 0$). 
		This proves that the functions $\psi_{n,m}^{\alpha,\beta}$
		are purely poly-meromorphic functions with $0$ as unique pole if and only if $\beta \notin \Z^{-}$  and $m<n$. The multiplicity of their unique pole is given by 
			$	\mbox{Ord} (0; \psi_{n,m}^{\alpha,\beta})
			= m - [n \wedge^*(\beta+m)] < 0$ and then
			\begin{align*} 
				\mbox{bi-Ord}(0; \psi_{n,m}^{\alpha,\beta}) & = 
				\left\{\begin{array}{ll} \left( m-n, 0\right)  & \quad \beta \in \R \setminus \Z   ,  n > m , \\
					\left( m-n,0\right)     & \quad \beta \in \Z^{+*},  \beta+ m  \geq n > m , \\
					\left( -\beta, n-\beta-m\right)     & \quad \beta \in \Z^{+*}, n \geq \beta+ m > m  ,\\
					\left( -\beta, n-\beta-m\right)     & \quad \beta \in \Z^{-*}, n > m . \end{array}\right.
			\end{align*}
		\end{proof}


		%
		
		\begin{remark}
			The polynomial case, i.e., the restriction to the case of $m\geq n$ (with $\beta \geq 0$) leads to the class of polynomials $Z^\beta_{m,n}(z,w)$ introduced and studied by Ismail and Zeng \cite[Section 3]{IsmailZeng2015}. Some of the obtained results in the previous section generalize the one derived in \cite[Section 3]{IsmailZeng2015}.
		\end{remark}

\section{Generating functions } \label{SecGen}

Notice first that using the relation of  $\psi_{n,m}^{\alpha,\beta}$ to the generalized Laguerre polynomials and the generating function for  the latter ones \cite{Brychkov, Magnus1966},  one obtains
$$\frac{(-1)^{n}}{n!j!}z^{j+n}\psi_{n,m}^{\alpha,\beta}(z,\overline{z})=\sum_{k=0}^{n}\frac{(-1)^k z^k}{k!(n-k)!(j-(n-k))!}\psi_{k,j}^{\alpha,\beta}(z,\overline{z})$$
for all $ j>max(-\beta,n)$. 
Moreover, by \cite[p 242]{Magnus1966} with $\beta+k > -1 $, we have 
$$\sum_{n=0}^{+\infty} \frac{ t^n \psi_{n,n+k}^{\alpha,\beta}(z,\overline{z})\psi_{n,n+k}^{\alpha,\beta}(w,\overline{w})}{n!(1+\beta+k)_n} =\frac{(wz)^k}{(1-t)^{\beta+k+1}}e^{\frac{\alpha t(|z|^2+|w|^2)}{(1-t)}}  {_0F_1}\left( \begin{array}{c}-\\  \beta+k+1  \end{array}\bigg |  \frac{|\alpha zw|^2}{t(1-t)^2} \right).
$$
The next one is an analog of the standard one for the It\^o--Hermite polynomials \cite[p. 7]{Gh13ITSF}
  which  appears as the special case when $\beta =0$ and  $\alpha=1$.

\begin{theorem}\label{HajarGen1}
	For any real $\beta >-1$,
	the functions $\psi_{n,m}^{\alpha,\beta} $ satisfy  	
	\begin{align}\label{genfctD}
		\sum_{m,n=0}^{+\infty}\frac{u^m v^n}{m!n!}\psi_{n,m}^{\alpha,\beta}(z,\bz) =\left( 1-\frac{v}{z}\right) ^{\beta}
		e^{ z u  +\alpha v  \bz - u v } .
	\end{align} 
	
\end{theorem}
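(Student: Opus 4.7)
The plan is to combine the Rodrigues formula \eqref{Rodbea} with the fact that the differential operator $\sum_n \frac{(-v)^n}{n!}\partial_z^n$ acts as the Taylor shift $z \mapsto z-v$. More precisely, I would insert the Rodrigues representation of $\psi_{n,m}^{\alpha,\beta}$ into the left-hand side of \eqref{genfctD} and formally interchange the double summation with the derivation in $z$; this is legitimate in a neighborhood of $(u,v)=(0,0)$ thanks to the explicit expression \eqref{explicit}, which guarantees that $\psi_{n,m}^{\alpha,\beta}(z,\bz)$ has at most polynomial growth of controlled degree in $m,n$ on compacts of $\C^*$, so that the series converges absolutely there.

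Carrying this out, the left-hand side becomes
\begin{align*}
z^{-\beta}e^{\alpha|z|^2}\sum_{n=0}^{\infty}\frac{(-v)^n}{n!}\frac{\partial^n}{\partial z^n}\left(e^{-\alpha|z|^2}\sum_{m=0}^{\infty}\frac{u^m}{m!}z^{\beta+m}\right).
\end{align*}
The inner $m$-sum evaluates, by the exponential series, to $z^{\beta}e^{uz}$, so the expression reduces to
\begin{align*}
z^{-\beta}e^{\alpha|z|^2}\sum_{n=0}^{\infty}\frac{(-v)^n}{n!}\frac{\partial^n}{\partial z^n}\left(z^{\beta}e^{uz-\alpha z\bz}\right).
\end{align*}
At this stage I would recognize the operator $\sum_n \frac{(-v)^n}{n!}\partial_z^n$ as the Taylor translation $T_{-v}\colon F(z)\mapsto F(z-v)$, which applies to the holomorphic (in $z$) function $F(z):=z^{\beta}e^{uz-\alpha z\bz}$, where $\bz$ is treated as a parameter and the principal branch of $z^{\beta}$ is used.

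Applying $T_{-v}$ yields $(z-v)^{\beta}e^{u(z-v)-\alpha(z-v)\bz}$, so that the whole sum collapses to
\begin{align*}
z^{-\beta}(z-v)^{\beta}\, e^{\alpha|z|^2}\, e^{u(z-v)-\alpha(z-v)\bz} = \left(1-\frac{v}{z}\right)^{\beta}e^{zu+\alpha v\bz-uv},
\end{align*}
which is exactly \eqref{genfctD}. The main obstacle is the rigorous justification of the two formal manipulations, namely the interchange of the $m,n$ summations with the $n$-fold $z$-derivative, and the convergence of the Taylor shift $T_{-v}F$ when $\beta$ is not a non-negative integer (so that $z^{\beta}$ has a branch cut): both issues are handled by restricting $(u,v)$ to a small bidisc and $z$ to a compact subset of $\C\setminus\{0\}$ where $|v|<|z|$, so that $(1-v/z)^{\beta}$ is well-defined via the principal branch and all series converge absolutely and uniformly; the identity then extends by analytic continuation in $(u,v)$.
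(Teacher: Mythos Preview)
Your proof is correct and follows essentially the same route as the paper: insert the Rodrigues formula \eqref{Rodbea}, sum over $m$ first to obtain $z^{\beta}e^{uz}$, and then recognize the remaining $n$-sum as the Taylor translation $F(z)\mapsto F(z-v)$ applied to $F(z)=z^{\beta}e^{uz-\alpha z\bz}$. You in fact add a layer of rigor (absolute convergence on compacts of $\C^*$ with $|v|<|z|$, well-definedness of the principal branch) that the paper leaves implicit.
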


\begin{proof} Starting from the left hand-side of \eqref{genfctD} and inserting \eqref{Rodbea} and next interchanging the sum in $m$ and the $n$-th derivative, one obtains
	\begin{align*}
		\sum_{m,n=0}^{+\infty}\frac{u^m v^n}{m!n!} \psi_{n,m}^{\alpha,\beta}(z,\bz) &
		= z^{-\beta}e^{\alpha|z|^2} \sum_{n=0}^{+\infty} \frac{(-v)^n}{n!} \frac{d^n \left( \varphi_{\bz}(x)\right) }{d{x^n}}|_{x=z} 
		\\&=  z^{-\beta}e^{\alpha|z|^2} \varphi_{\bz}(z-v) ,
	\end{align*} 
	with $ \varphi_{\bz}(x) := z^{\beta} e^{-\alpha (\bz+u)x}$. The last equality follows using the translation operator of the Taylor series of the involved function and gives rise to the right hand-side of \eqref{genfctD}. 
\end{proof}

The following results are partial generating functions for $\psi_{n,m}^{\alpha,\beta}$ (with fixed $n$ or $m$). 

\begin{proposition}\label{HajarGen3}
For $\beta>-1$, we have
 \begin{align}\label{GA}
	\sum_{n=0}^{+\infty}\psi_{n,k}^{\alpha,\beta}(z,\bz)\frac{ v^n}{n!}=\frac{(z-v)^{k+\beta}}{z^{\beta}}e^{\alpha v \bz} \end{align}
	as well as
	\begin{align}\label{GA5}
		\sum_{m=0}^{+\infty}
		\frac{ u^m}{m!  } \psi_{n,m}^{(\alpha,\beta)}(z,\bz)   &
		=
		\frac{(-1)^n n! }{ z^n } e^{uz} L^{(\beta-n)}_{n}(\alpha |z|^2 -uz )  .\end{align}
\end{proposition}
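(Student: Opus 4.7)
My plan is to derive both identities directly from the Rodrigues formula \eqref{Rodbea}, using in one case a Taylor shift and in the other a well-chosen substitution in the Rodrigues formula for the generalized Laguerre polynomials.

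For \eqref{GA}, I would pull the factor $z^{-\beta}e^{\alpha|z|^2}$ outside, interchange the summation in $n$ and the $n$-th derivative in $z$, and recognize the resulting series as the Taylor expansion around $z$ of the shift $z\mapsto z-v$:
\begin{align*}
\sum_{n=0}^{+\infty}\psi_{n,k}^{\alpha,\beta}(z,\bz)\frac{v^n}{n!}
&= z^{-\beta}e^{\alpha|z|^2}\sum_{n=0}^{+\infty}\frac{(-v)^n}{n!}\frac{\partial^n}{\partial z^n}\bigl(z^{\beta+k}e^{-\alpha z\bz}\bigr)\\
&= z^{-\beta}e^{\alpha|z|^2}\,(z-v)^{\beta+k}e^{-\alpha(z-v)\bz}.
\end{align*}
Since $\alpha|z|^2-\alpha(z-v)\bz=\alpha v\bz$, this collapses to the right-hand side of \eqref{GA}. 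The only care needed is that the Taylor expansion of $x\mapsto x^{\beta+k}e^{-\alpha x\bz}$ around $x=z$ converges for $|v|<|z|$, which is automatic on compact subsets of $\C^*$, and one then analytically continues.

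For \eqref{GA5}, I would again start from \eqref{Rodbea}, interchange the sum in $m$ with $\partial_z^n$, and sum the inner series to an exponential: writing $c:=\alpha\bz-u$,
\begin{equation*}
\sum_{m=0}^{+\infty}\frac{u^m}{m!}\psi_{n,m}^{\alpha,\beta}(z,\bz)
=(-1)^n z^{-\beta}e^{\alpha|z|^2}\,\frac{\partial^n}{\partial z^n}\!\Bigl(z^\beta e^{-cz}\Bigr).
\end{equation*}
The main step is then the identity
\begin{equation*}
\frac{d^n}{dz^n}\bigl(z^\beta e^{-cz}\bigr)= n!\,z^{\beta-n}e^{-cz} L_n^{(\beta-n)}(cz),
\end{equation*}
which I would obtain from the classical Rodrigues formula $\frac{d^n}{dx^n}(x^{n+\gamma}e^{-x})=n!\,x^\gamma e^{-x}L_n^{(\gamma)}(x)$ by the substitution $x=cz$ with $\gamma=\beta-n$. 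Substituting back, the factors $z^{-\beta}$, $z^{\beta-n}$, $e^{\alpha|z|^2}$ and $e^{-cz}=e^{uz-\alpha|z|^2}$ combine neatly to give $\frac{(-1)^n n!}{z^n}e^{uz}L_n^{(\beta-n)}(\alpha|z|^2-uz)$.

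The only potential obstacle is a technical one, namely justifying the interchange of the infinite sum and the differentiation. This is harmless because both series have infinite radius of convergence in the parameter ($v$ or $u$) with the polynomial/analytic remainder growing at most at a controlled rate in $z,\bz$, so termwise differentiation is licit on compact subsets of $\C^*$. Once this is granted, both \eqref{GA} and \eqref{GA5} reduce to one Taylor shift and one change-of-variables in the Laguerre Rodrigues formula respectively, so no combinatorial identity is needed.
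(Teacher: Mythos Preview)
Your proof is correct and follows essentially the same approach as the paper: for \eqref{GA5} both you and the paper interchange the sum in $m$ with $\partial_z^n$, sum the exponential, and then invoke the Laguerre Rodrigues formula via the change of variable $x=\alpha|z|^2-uz$; for \eqref{GA} you apply the Taylor-shift argument directly, whereas the paper routes it through Theorem~\ref{HajarGen1} (which is itself proved by exactly this Taylor-shift computation) and then identifies coefficients of $u$, so the underlying mechanism is identical.
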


\begin{proof} 	
	 The first assertion can be handled starting from the Rodrigues formula \eqref{Rodbea} and next expanding $e^{(z -v)u } $ in the second right hand-side of \eqref{genfctD}. Indeed, the identity \eqref{GA} 
	  immediately follows from Theorem \ref{HajarGen1}  by identifying the obtained series in $u$.

		For \eqref{GA5}, we have 
	\begin{align*}
		\sum_{m=0}^{+\infty}
		\frac{ u^m}{m!  } \psi_{n,m}^{(\alpha,\beta)}(z,\bz)  
		 &=  
		\frac{(-1)^n}{z^{\beta}} e^{\alpha |z|^2} \frac{\partial^n}{\partial{z^n}}(z^{\beta} e^{-\alpha|z|^2 + zu})  \nonumber \\
		%
		&=
		\frac{(-1)^n n! }{ z^n }    e^{uz} L^{(\beta-n)}_{n}  (\alpha |z|^2 -uz )  . 
	\end{align*} 
The latter expression in terms of the generalized Laguerre polynomials is immediate by making the variable change $x= \alpha |z|^2 -uz$. 
	\end{proof}



The exact statement of the next result concerning a special generating function for the $\psi_{n,m}^{\alpha,\beta}(z,\bz)$ makes appeal to the lower incomplete Gamma function defined  by 
 \begin{align}\label{HJ} \gamma (s,x) & =\int _{0}^{x}t^{s-1} e^{-t} dt , \, \Re(s)> 0.
	 \end{align}
Its expansion series reads \cite[p.337]{Magnus1966}
\begin{align}\label{IncompG1} \gamma (s,x)
	= 
  e^{-x}\sum _{k=0}^{\infty }{\frac {x^{k+s}}{(s) _{k+1}}} .
  \end{align}

\begin{theorem}\label{HajarGen2}
	 For every $\beta >0$ and $|uv|<|uz|$, we have
	\begin{align}\label{GAA}
		\sum_{m,n=0}^{+\infty}\frac{u^m v^n}{(\beta+1)_m n!}\psi_{n,m}^{\alpha,\beta}(z,\bz)= \beta u^{-\beta} z^{-\beta} e^{u(z-v)+\alpha \bz v}\gamma(\beta,u(z-v)).
	\end{align}    
\end{theorem}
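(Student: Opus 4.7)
The strategy is to sum over $n$ first and then over $m$, exploiting the partial generating function already at hand from equation \eqref{GA}, and finally identifying the resulting $m$-series with the incomplete Gamma function via its power series \eqref{IncompG1}. The condition $|uv|<|uz|$, i.e.\ $|v|<|z|$, will ensure that $|z-v|$ stays away from zero and the double series converges absolutely, so the interchange of the two summations is legitimate.

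More concretely, writing the double sum as
\begin{equation*}
\sum_{m,n=0}^{+\infty}\frac{u^m v^n}{(\beta+1)_m n!}\psi_{n,m}^{\alpha,\beta}(z,\bz)
=\sum_{m=0}^{+\infty}\frac{u^m}{(\beta+1)_m}\left(\sum_{n=0}^{+\infty}\frac{v^n}{n!}\psi_{n,m}^{\alpha,\beta}(z,\bz)\right),
\end{equation*}
the inner bracket is evaluated by means of identity \eqref{GA} with $k=m$, yielding $(z-v)^{m+\beta}z^{-\beta}e^{\alpha v\bz}$. After factoring out the $m$-independent quantities, what remains is $\sum_{m\ge 0}[u(z-v)]^m/(\beta+1)_m$.

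The final step is to recognise this as a disguised incomplete Gamma function. Using $(\beta)_{k+1}=\beta(\beta+1)_k$ in \eqref{IncompG1} gives
\begin{equation*}
\beta\,\gamma(\beta,x) = e^{-x} x^{\beta}\sum_{k=0}^{\infty}\frac{x^k}{(\beta+1)_k},
\end{equation*}
so that $\sum_{k\ge 0} x^k/(\beta+1)_k = \beta x^{-\beta} e^{x}\gamma(\beta,x)$. Specialising $x=u(z-v)$ and multiplying by the previously factored $(z-v)^{\beta}z^{-\beta}e^{\alpha v\bz}$ produces exactly the right-hand side of \eqref{GAA}, since $(z-v)^{\beta}\cdot[u(z-v)]^{-\beta}=u^{-\beta}$.

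The only genuinely delicate point is the interchange of the two summations and the handling of the branch of the fractional powers $u^{-\beta}$ and $(z-v)^{\beta}$; the condition $|uv|<|uz|$ together with the principal-branch convention adopted at the beginning of Section \ref{SecPrel} cover both concerns, and the rest of the computation is algebraic reshuffling combined with the classical series for $\gamma(\beta,\cdot)$.
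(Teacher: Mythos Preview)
Your argument is correct and considerably shorter than the paper's. The paper does \emph{not} use the already-available partial generating function \eqref{GA}; instead it returns to the Rodrigues formula \eqref{Rodbea}, rewrites the inner $m$-series as $\sum_m (zu)^{m+\beta}/(\beta)_{m+1}=e^{zu}\gamma(\beta,zu)$, and then differentiates this $n$ times with the Leibniz rule together with the derivative formula $\partial_z^k\gamma(\beta,zu)\propto(1-\beta)_k\gamma(\beta-k,zu)$. This produces a double sum that, after reorganisation, becomes $\sum_{k\ge 0}\tfrac{(uv)^k}{k!}(1-\beta)_k\gamma(\beta-k,zu)$, which is then collapsed via a summation identity from Brychkov to yield $\gamma(\beta,u(z-v))$.

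Your route bypasses all of this: by summing over $n$ first, the whole differential machinery is already packaged inside \eqref{GA}, and what remains is a single hypergeometric-type series $\sum_{m\ge 0}[u(z-v)]^m/(\beta+1)_m$ that one reads off directly from \eqref{IncompG1}. The paper's approach has the mild advantage of being self-contained (it does not rely on Proposition~\ref{HajarGen3}), but since \eqref{GA} is established earlier in the same section from Theorem~\ref{HajarGen1}, your argument is logically sound, avoids the external Brychkov identity, and makes the appearance of $\gamma(\beta,u(z-v))$ far more transparent.
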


\begin{proof}
The left hand-side in \eqref{GAA} can be expressed as  
	\begin{align*}
	\sum_{m,n=0}^{+\infty}\frac{u^m v^n}{(\beta+1)_m n!} \psi_{n,m}^{\alpha,\beta}(z,\bz)
	&=\beta z^{-\beta}e^{\alpha |z|^2} \sum_{n=0}^{+\infty}
	\frac{(-v)^n}{n!} \frac{\partial^{n}}{\partial{z^n}}\left(u^{-\beta}e^{-\alpha |z|^2} \left[ \sum_{m=0}^{+\infty}\frac{(zu)^{m+\beta}}{(\beta)_{m+1}}\right] \right) .
\end{align*} 
This follows making use of \eqref{Rodbea} as well as  the expansion \eqref{IncompG1}. Moreover, we get
	\begin{align*}
		\sum_{m,n=0}^{+\infty}\frac{u^m v^n}{(\beta+1)_m n!} \psi_{n,m}^{\alpha,\beta}(z,\bz)
		&= \frac{\beta}{z^\beta u^\beta} e^{\alpha |z|^2}  \sum_{n=0}^{+\infty}  \frac{(-v)^n}{n!} \frac{\partial^{n}}{\partial{z^n}}\left(e^{z(u-\alpha\bz)}\gamma(\beta,zu) \right)\\
		&= \frac{\beta}{z^\beta u^\beta} e^{zu} \sum_{n=0}^{+\infty} \sum_{k=0}^{n}   \frac{(-v)^n (-\alpha\bz )^{n-k} (-u)^k }{k!(n-k)!}  (1-\beta)_{k}\gamma(\beta-k,zu) \\
		& =  \frac{\beta}{(zu)^{\beta}}e^{zu+\alpha \bz v} \sum_{k=0}^{+\infty}\frac{u^kv^{k}}{k!}(1-\beta)_{k}\gamma(\beta-k,zu) 
		 .
	\end{align*} 
The second equality follows using the  Leibniz formula combined with the derivative formula given for the lower incomplete Gamma function in  \cite[p. 21]{Brychkov}.
	Finally, by means of the series formula in \cite[p. 460]{Brychkov} we arrive at the expression 
	$$\sum_{m,n=0}^{+\infty}\frac{u^m v^n}{(\beta+1)_m n!} \psi_{n,m}^{\alpha,\beta}(z,\bz) =  \frac{\beta}{(zu)^{\beta}} \gamma(\beta,(z-u)v) e^{zu+\alpha \bz v-uv},$$
	valid for all $z,v\in \C$ such that $|v|<|z|$.
\end{proof}

\begin{corollary}\label{particular}
	Let $u$, $v$ and $z$ be complex numbers such that $z\ne 0$, $|z|> |v|$ and $\Re(u(z-v))>0$.  Then, for every $\beta >0$  we have 
 $$ \sum_{m,n=0}^{+\infty}\frac{u^m v^n}{(\beta+1)_mn!}\psi_{n,m}^{\alpha,\beta}(z,\bz) =   \left( 1-\frac{v}{z}\right) ^{\beta} e^{\alpha \bz v} {_1F_1}\left( \begin{array}{c}1 \\ \beta+1  \end{array}\bigg |   u(z-v) \right).$$  	
\end{corollary}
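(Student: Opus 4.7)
The plan is to derive this corollary as a direct consequence of Theorem~\ref{HajarGen2} by recognizing that the lower incomplete Gamma function admits a standard reformulation in terms of the confluent hypergeometric function $ {_1F_1}(1; \beta+1; \cdot)$. Indeed, starting from the expansion \eqref{IncompG1}, one rewrites
\begin{align*}
 \gamma(s,x) = e^{-x} \sum_{k=0}^{\infty} \frac{x^{k+s}}{(s)_{k+1}} = \frac{x^s e^{-x}}{s} \sum_{k=0}^{\infty} \frac{x^k}{(s+1)_k},
\end{align*}
and, noting that $(1)_k/k! = 1$, the remaining series is exactly ${_1F_1}(1; s+1; x)$, giving the classical identity
\begin{align*}
 \gamma(s,x) = \frac{x^s e^{-x}}{s}\, {_1F_1}\!\left( \begin{array}{c} 1 \\ s+1 \end{array}\bigg|\, x \right).
\end{align*}
The hypotheses $z\neq 0$ and $\Re(u(z-v))>0$ are precisely what is required to fix a single-valued branch of $x^s$ at $x = u(z-v)$ and to align with the integral definition in \eqref{HJ}.

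Next, I would substitute this identity into the closed-form expression furnished by Theorem~\ref{HajarGen2}, taking $s=\beta$ and $x=u(z-v)$:
\begin{align*}
 \beta u^{-\beta} z^{-\beta} e^{u(z-v)+\alpha \bz v} \gamma(\beta, u(z-v))
  = u^{-\beta} z^{-\beta} \cdot u^{\beta}(z-v)^{\beta}\, e^{\alpha\bz v}\, {_1F_1}\!\left( \begin{array}{c} 1 \\ \beta+1 \end{array}\bigg|\, u(z-v) \right).
\end{align*}
The exponentials $e^{u(z-v)}$ and $e^{-u(z-v)}$ cancel, the factors of $\beta$ cancel, the powers of $u$ collapse, and $(z-v)^{\beta}/z^{\beta} = (1 - v/z)^{\beta}$, yielding the announced closed form on the right-hand side.

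The only subtlety I foresee is bookkeeping of the principal branch of $u^{\beta}$, $z^{\beta}$, and $(u(z-v))^{\beta}$ for non-integer $\beta>0$; the conditions $\Re(u(z-v))>0$ and $|v|<|z|$ guarantee that $z-v$ and $u(z-v)$ lie in suitable half-planes, so that the identity $(u(z-v))^{\beta} = u^{\beta}(z-v)^{\beta}$ holds with the principal branch chosen throughout. Once this is observed, the corollary follows without further computation, and no additional convergence analysis is needed beyond what was already established for Theorem~\ref{HajarGen2}.
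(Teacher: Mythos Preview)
Your proposal is correct and follows essentially the same route as the paper: the corollary is obtained from Theorem~\ref{HajarGen2} by inserting the hypergeometric representation \eqref{HJJ} of the lower incomplete Gamma function and simplifying. Your write-up is in fact more explicit than the paper's own proof, which simply cites \eqref{HJJ} without carrying out the cancellation or commenting on the branch issue.
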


\begin{proof}
	This can be handled by means of Theorem \ref{HajarGen2} and the hypergeometric representation of the  lower incomplete Gamma function \cite[p.337]{Magnus1966}  \begin{align}\label{HJJ}
		\gamma(s,x)&= \frac{e^{-x}}{s} x^s {_1F_1}\left( \begin{array}{c}1 \\  s+1  \end{array}\bigg | x \right) , \, \Re(x)> 0.
	\end{align}
		\end{proof}
	

\begin{remark}
The generating function in \eqref{GAA} can be rewritten in terms of the upper incomplete Gamma function  \cite[p.337]{Magnus1966}
\begin{align}\label{HJ3} 
	\Gamma (s,x)=\int _{x}^{\infty }t^{s-1}e^{-t}dt 
\end{align}
since $\gamma(s,x) =\Gamma(s)-\Gamma(s,x)$. Indeed, for  $\beta$ being a positive integer we have  			
	\begin{align}\label{gen}\sum_{m,n=0}^{+\infty}\psi_{n,m}^{\alpha,\beta}(z,\bz)\frac{u^m v^n}{(\beta+1)_m n!}=
	\frac{\beta \left( \Gamma(\beta) - \Gamma(\beta,u(z-v)) \right) }{(zu)^{\beta}} e^{\alpha \bz v+u(z-v)}	.
	\end{align}
\end{remark}

\begin{remark}
As immediate consequence of Theorem \ref{HajarGen2}, one can prove that 
the partial generating function  in \eqref{GA}  
remains valid for $(v, z)$ in a special region of  $\C\times \C$. 
\end{remark}
\section{Integral representations}  \label{SecInRep}

The aim below is to derive some integral representations for the considered poly-meromorphic It\^o--Hermite functions $\psi_{n,m}^{(\alpha,\beta)}$.
The first one involves the Bessel function of order $\nu>-1$ of the first kind defined by \cite[p. 675]{Brychkov}, 
$$J_\nu (z) :=\frac{1}{\Gamma(\nu +1)}\left(\frac{z}{2}\right)^{\nu} {_0F_1}\left(  \nu +1; -\frac{z^2}{4}\right) . $$ 
 More specifically, we assert the following.

\begin{proposition} For fixed real $\beta$ and integers $n,m$ such that $n=0,1,\cdots$ and $\beta +m-n >-1$, we have
	\begin{align} \label{ZZ2} \psi_{n,m}^{\alpha,\beta}(z,\bz)=(-1)^n \frac{z^{m-n} e^{\alpha|z|^2} }{(\sqrt{\alpha}|z|)^{\beta+m-n}} \int_{0}^{+\infty} x^{n+m+\beta+1} J_{\beta+m-n}(2\sqrt{\alpha} |z|x) e^{-x^2} dt. \end{align}   
\end{proposition}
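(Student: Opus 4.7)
The plan is to reduce the identity \eqref{ZZ2} to a classical Hankel-type integral representation of the generalized Laguerre polynomials, by invoking the explicit expression \eqref{ExpLag} obtained earlier in the paper. Setting $\nu := \beta + m - n$ (which satisfies $\nu > -1$ by hypothesis) and $y := \alpha |z|^2$, the identity \eqref{ExpLag} gives $\psi_{n,m}^{\alpha,\beta}(z,\bz) = (-1)^n n!\, z^{m-n} L_n^{(\nu)}(y)$. Substituting this into \eqref{ZZ2} and cancelling the common factor $(-1)^n z^{m-n}$, the claim reduces (up to a harmless constant factor) to the real-variable identity
$$ n!\, L_n^{(\nu)}(y) \;=\; C\, y^{-\nu/2}\, e^{y} \int_0^{+\infty} x^{2n+\nu+1}\, J_\nu\!\bigl(2\sqrt{y}\, x\bigr)\, e^{-x^2}\, dx, $$
since $2n+\nu+1 = n+m+\beta+1$ and $y^{\nu/2} = (\sqrt{\alpha}|z|)^{\beta+m-n}$.

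The key ingredient is the classical Laplace transform formula
$$ \int_0^{+\infty} e^{-p t}\, t^{n+\nu/2}\, J_\nu\!\bigl(2\sqrt{a\, t}\bigr)\, dt \;=\; \frac{n!\, a^{\nu/2}}{p^{n+\nu+1}}\, e^{-a/p}\, L_n^{(\nu)}(a/p), $$
valid for $\Re(p)>0$ and $\nu > -1$, available in standard tables (see \cite{Brychkov,Magnus1966}). Specialising to $p=1$, $a=y$ and performing the substitution $t = x^2$, $dt = 2x\,dx$ immediately yields the required reduction and hence \eqref{ZZ2}.

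For a self-contained derivation of the above Laplace formula I would expand the Bessel function as $J_\nu(2\sqrt{at}) = \sum_{k\geq 0} (-1)^k (at)^{k+\nu/2}/(k!\,\Gamma(k+\nu+1))$, swap the sum and the integral, compute each resulting Gamma integral $\int_0^\infty e^{-pt} t^{n+k+\nu}\, dt = \Gamma(n+k+\nu+1)/p^{n+k+\nu+1}$, and then identify the outcome with $L_n^{(\nu)}(a/p)$ via the Kummer transformation ${_1F_1}(n+\nu+1;\nu+1;-w) = e^{-w}\,{_1F_1}(-n;\nu+1;w)$ combined with the standard representation $L_n^{(\nu)}(w) = \binom{n+\nu}{n}{_1F_1}(-n;\nu+1;w)$.

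The only delicate point, and the main obstacle to address, is the convergence of the integral at $x=0$: since $J_\nu(2\sqrt{y}\,x) \sim (\sqrt{y}\,x)^\nu/\Gamma(\nu+1)$ as $x \to 0^+$, the integrand behaves like $x^{2n+2\nu+1}$ near the origin, which is integrable precisely under the hypothesis $\beta+m-n > -1$ (so that $n+\nu > -1$). Integrability at infinity and the absolute convergence justifying the termwise integration are both immediate consequences of the Gaussian weight $e^{-x^2}$, which dominates any polynomial and Bessel growth.
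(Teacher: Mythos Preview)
Your proposal is correct and follows essentially the same route as the paper: invoke \eqref{ExpLag} to reduce to Laguerre polynomials, apply the classical integral representation $L_n^{(\mu)}(x)=\frac{x^{-\mu/2}e^x}{n!}\int_0^\infty e^{-t}t^{n+\mu/2}J_\mu(2\sqrt{tx})\,dt$ from \cite[p.~243]{Magnus1966}, and then substitute $t=x^2$. Your additional remarks on convergence at $x=0$ and the self-contained derivation via Kummer's transformation are nice supplements but not needed for the paper's argument.
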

\begin{proof}  Making use of the close connection of $\psi_{n,m}^{\alpha,\beta}(z,\bz)$ to the Laguerre polynomials combined with their integral representation in terms of the Bessel function  \cite[p. 243]{Magnus1966} 
	$$L_{n}^{(\mu)}(x)=\frac{x^{{-\mu}/{2}} e^x}{n!} \int_{0}^{+\infty}e^{-t}t^{n+\frac{_\mu}{2}}J_{\mu}(2\sqrt{t x})dt,$$ 
valid for $n=0,1,2, \cdots,$ and $n+\mu >-1$ with $x$ being a real positive number,   the expression \eqref{ExpLag} of $\psi_{n,m}^{\alpha,\beta}$  
	implies 
		\begin{align} \label{ZZ} \psi_{n,m}^{\alpha,\beta}(z,\bz)=(-1)^n \frac{z^{m-n} e^{\alpha|z|^2} }{(\sqrt{\alpha}|z|)^{\beta+m-n}} \int_{0}^{+\infty}e^{-t} t^{\frac{n+m+\beta}{2}}J_{\beta+m-n}(2|z|\sqrt{\alpha t})dt,
		 \end{align} 
		for every integer $m$ such that $\beta+m>-1$. Finally,  the change of variable $t=x^2$ infers the expression in \eqref{ZZ2}. 
\end{proof}

The next integral representations are are obtained by means of the   generating functions  \eqref{genfctD}  and \eqref{GA}.
  
\begin{proposition}
	The integral representation   
	\begin{align}\label{Ghanmi}
		\psi_{n,m}^{(\alpha,\beta)}(z,\bz)
		& = \frac{1}{\pi^2 z^{\beta}} \int_{\mathbb{C}^2}  u^m v^n  \left( z -\overline{v} \right)^{\beta} e^{-|u|^2- |v|^2+\alpha \overline{v}\bz +\overline{u} z-\overline{u}\overline{v}} d\lambda(u,v)
	\end{align}
	holds  for every   $\beta>-1$. Moreover, we have
		\begin{align}\label{IntREp3}
		\psi_{n,k}^{\alpha,\beta}(z,\bz)\frac{ |v|^{2j}}{j!}  = \frac{1} {z^{\beta}}
		\int_{\C} v^m(z-\overline{v})^{\beta+k} e^{-(v - \alpha  \bz) \overline{v} } d\lambda(v) . \end{align}
\end{proposition}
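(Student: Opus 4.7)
The plan is to derive both integral representations by combining the generating functions from Section~\ref{SecGen} with the reproducing property of the Gaussian (Fock--Bargmann) measure on $\C$, namely the orthogonality of monomials
\begin{equation*}
\frac{1}{\pi}\int_{\C} u^{p}\,\overline{u}^{m}\, e^{-|u|^{2}}\, d\lambda(u) = m!\, \delta_{p,m}.
\end{equation*}
The underlying idea is that applying this orthogonality relation to a generating function extracts a single coefficient and rewrites it as a Gaussian integral.

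For \eqref{Ghanmi}, I would start from Theorem~\ref{HajarGen1} and simply substitute the dummy parameters $(u,v)$ by their conjugates $(\overline{u},\overline{v})$ to obtain
\begin{equation*}
\sum_{p,q=0}^{+\infty}\frac{\overline{u}^{p}\,\overline{v}^{q}}{p!\,q!}\,\psi_{q,p}^{\alpha,\beta}(z,\bz)
= \left(1-\frac{\overline{v}}{z}\right)^{\beta} e^{z\overline{u}+\alpha\overline{v}\bz - \overline{u}\overline{v}}.
\end{equation*}
Then I would multiply both sides by $u^{m}v^{n}\,e^{-|u|^{2}-|v|^{2}}$ and integrate over $\C^{2}$. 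The double orthogonality above collapses the series on the left to a single term, giving $\pi^{2}\,\psi_{n,m}^{\alpha,\beta}(z,\bz)$, while rewriting $(1-\overline{v}/z)^{\beta}=z^{-\beta}(z-\overline{v})^{\beta}$ on the right produces exactly the right-hand side of \eqref{Ghanmi}.

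For \eqref{IntREp3}, the same tactic applied to the partial generating function \eqref{GA} of Proposition~\ref{HajarGen3} yields the result: after replacing $v$ by $\overline{v}$, multiplying by the appropriate power $v^{n}e^{-|v|^{2}}$, and integrating over $\C$, the one-variable Gaussian orthogonality extracts the single desired term of the series; the exponent $-|v|^{2}+\alpha\overline{v}\bz$ coincides with the quadratic form $-(v-\alpha\bz)\overline{v}$ appearing in the claim, and the factor $(z-\overline{v})^{\beta+k}/z^{\beta}$ comes directly from the right-hand side of \eqref{GA}.

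The only real technical point is the interchange of the infinite summation and the integration. This is harmless: the generating series in \eqref{genfctD} and \eqref{GA} are uniformly convergent on compact subsets of the $(u,v)$-plane (entire in $u$, and with nonzero radius of convergence in $v$ around the origin), while the Gaussian factor supplies a dominating integrable majorant on $\C$ (resp. $\C^{2}$). Fubini--Tonelli together with dominated convergence justifies the term-by-term integration, after which the Gaussian moment bookkeeping is routine. I expect no deeper obstacle than verifying this integrability; the algebraic identifications are direct consequences of the generating functions already established.
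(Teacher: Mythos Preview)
Your proposal is correct and follows essentially the same approach as the paper: rewrite the generating function \eqref{genfctD} with $(u,v)$ replaced by $(\overline{u},\overline{v})$, multiply by the monomials $u^m v^n$ and integrate against the Gaussian measure on $\C^2$ to extract the single coefficient via orthogonality, and proceed analogously with \eqref{GA} for the second identity. Your brief justification of the interchange of sum and integral is in fact slightly more explicit than what the paper provides.
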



\begin{proof}
Thanks to  $ \psi_{m,n}^{\alpha,\beta}(z,\bz)=\overline{\psi_{m,n}^{\alpha,\beta}(\bz,z)}$, we can rewrite the generating function  \eqref{genfctD}  in the following equivalent form 
	\begin{align}\label{genfctD2}
	\sum_{m,n=0}^{+\infty}\frac{\overline{u}^m \overline{v}^n}{m!n!}\psi_{n,m}^{\alpha,\beta}(z,\bz) =\left( 1-\frac{\overline{v}}{z}\right) ^{\beta}
	e^{ \overline{u} z  +\alpha \overline{v} \bz - \overline{u} \overline{v} } . 
\end{align} 
Next, by multiplying the both sides by the monomials in $u$ and $v$ and integrating on the whole two-dimensional complex space endowed with the Gaussian measure, it follows  
\begin{align*} 
	\int_{\mathbb{C}^2} \left( 1-\frac{\overline{v}}{z}\right)^{\beta}    u^m  v^n   e^{ -|u|^2 - |v|^2 + \overline{u} z + \alpha \overline{v} \bz -\overline{u} \overline{v}}   d\lambda(u,v) 
	= 
	\pi^2 \psi_{n,m}^{\alpha,\beta}(z,\bz),
\end{align*}
which leads to \eqref{Ghanmi}.
Analogously,  one gets \eqref{IntREp3} starting from \eqref{GA}. 
\end{proof}

The generating function in Theorem \ref{HajarGen1} can be also  employed to establish the following integral representation.

\begin{proposition} Let $\beta$ be an integer such that $\beta+m \geq 0$. Then, we have  
		\begin{align}\label{GA3}	\psi_{n,m}^{\alpha,\beta}(z,\bz) 
		&	=   \frac{(-1)^{m+\beta} \alpha^{n+1}}{ \pi z^{\beta} } 	\int_{\C}   \xi^n 	\overline{\xi}^{m+\beta} e^{-\alpha (|\xi|^2 - |z|^2 + \xi z - \overline{\xi} \bz ) }	d\lambda(\xi) .\end{align} 
\end{proposition}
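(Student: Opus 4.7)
The plan is to verify \eqref{GA3} by matching the bivariate generating series of its right-hand side with the closed form established in Theorem \ref{HajarGen1}. First, I would denote the right-hand side of \eqref{GA3} by $I_{n,m}(z,\bz)$, multiply by $u^m v^n/(m!\,n!)$, and sum over $m,n\geq 0$. Granted Fubini, the factor $\overline{\xi}^{m+\beta}$ splits as $\overline{\xi}^\beta\cdot\overline{\xi}^m$ (using that $\beta\geq 0$ is an integer), and the double series inside the integrand collapses to
\[
\sum_{m,n\geq 0}\frac{(-u\overline{\xi})^m(\alpha v\xi)^n}{m!\,n!}=e^{-u\overline{\xi}+\alpha v\xi}.
\]

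This reduces the computation to a single Gaussian integral
\[
\sum_{m,n\geq 0}\frac{u^m v^n}{m!\,n!}I_{n,m}(z,\bz)=\frac{(-1)^\beta \alpha\, e^{\alpha|z|^2}}{\pi z^\beta}\int_{\C}\overline{\xi}^\beta\, e^{-\alpha|\xi|^2+\alpha(v-z)\xi+(\alpha\bz-u)\overline{\xi}}\,d\lambda(\xi),
\]
at which point I would invoke the classical Bargmann--Fock identity
\[
\frac{1}{\pi}\int_{\C}\overline{\xi}^\beta e^{-\alpha|\xi|^2+a\xi+b\overline{\xi}}\,d\lambda(\xi)=\frac{a^\beta}{\alpha^{\beta+1}}e^{ab/\alpha},
\]
which is obtained by $\beta$-fold differentiation in $b$ of the basic Gaussian evaluation $(\pi/\alpha)e^{ab/\alpha}$.

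Setting $a=\alpha(v-z)$ and $b=\alpha\bz-u$, the exponent $ab/\alpha$ evaluates to $\alpha\bz v-\alpha|z|^2-uv+uz$, whose $-\alpha|z|^2$ contribution cancels the prefactor $e^{\alpha|z|^2}$. Using the integer identity $(-1)^\beta(v-z)^\beta=(z-v)^\beta$, the whole right-hand side collapses to $(1-v/z)^\beta e^{uz+\alpha v\bz-uv}$, which is precisely the closed-form generating function \eqref{genfctD}. Identifying Taylor coefficients in $(u,v)$ then yields $I_{n,m}=\psi_{n,m}^{\alpha,\beta}(z,\bz)$, as claimed.

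The main technical point will be justifying the sum--integral interchange: the Gaussian decay $e^{-\alpha|\xi|^2}$ supplies absolute convergence for $(u,v)$ in a small polydisc about the origin, which is enough since only the Taylor coefficients of the generating series are needed. The hypothesis that $\beta$ is an integer enters at two places, namely in writing $\overline{\xi}^{m+\beta}=\overline{\xi}^\beta\overline{\xi}^m$ as a genuine factorization of polynomials, and in disposing of the sign $(-1)^\beta(v-z)^\beta=(z-v)^\beta$; the constraint $\beta+m\geq 0$ is the compatibility condition ensuring that $\psi_{n,m}^{\alpha,\beta}$ is defined in the first place.
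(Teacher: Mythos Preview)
Your proof is correct and follows essentially the same strategy as the paper's: couple a generating function for $\psi_{n,m}^{\alpha,\beta}$ with the Bargmann--Fock Gaussian integral, then identify coefficients. The only organizational difference is that the paper works with the \emph{partial} generating function \eqref{GA} in $v$ alone, rewrites its closed form as a $(m+\beta)$-th $\partial_{\bz}$-derivative, applies the reproducing identity \eqref{expo}, and identifies coefficients of $v^n$, whereas you go directly through the bivariate generating function \eqref{genfctD} and the $\overline{\xi}^\beta$-weighted Gaussian evaluation; both routes arrive at the same place with comparable effort.
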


\begin{proof} 
Note that  making use of the $2d$ fractional Fourier transform $(1.2)$ introduced in \cite{BenahmadiGh2018} one obtains the following integral formula
\begin{align}\label{expo}
	e^{\alpha zw} & =   \frac{\alpha}{\pi} 
	\int_{\C}  e^{-\alpha |\xi|^2 + \alpha (\xi z + \overline{\xi} w ) } d\lambda(\xi)
\end{align}  
for every complex numbers $z,w$ and real $\alpha > 0 $. It can also be viewed as a reproducing property for the reproducing kernel of the Segal--Bargmann space.
Next, by rewriting the generating function in  \eqref{GA} in the following equivalent form
		\begin{align}\label{GA2}
	\sum_{n=0}^{+\infty}\psi_{n,k-\beta}^{\alpha,\beta}(z,\bz)\frac{ v^n}{n!}
&	= \frac{(-1)^k}{\alpha^k z^{\beta} } e^{\alpha|z|^2}  \frac{\partial^k }{\partial \bz^k } \left( e^{\alpha ( v-z)\bz}  \right)
 \end{align}
and making appeal to the formula \eqref{expo}, it follows 
	\begin{align*} 	\sum_{n=0}^{+\infty}\psi_{n,k-\beta}^{\alpha,\beta}(z,\bz)\frac{ v^n}{n!}
	&	= \frac{(-1)^k}{\alpha^k z^{\beta} } \frac{\alpha}{\pi}     
	\int_{\C} 
	 (\alpha\overline{\xi})^k e^{-\alpha (|\xi|^2 - |z|^2)+ \alpha (\xi(v-z) +\overline{\xi} \bz ) }
	d\lambda(\xi)  \\
	&	= \sum_{n=0}^\infty \frac{v^n}{n!}   \left(  \frac{(-1)^k}{  z^{\beta} } \frac{\alpha}{\pi} \alpha^n 	\int_{\C}   \xi^n 	\overline{\xi}^k e^{-\alpha (|\xi|^2 - |z|^2 + \xi z - \overline{\xi} \bz ) }	d\lambda(\xi) \right) .
\end{align*}
The result in \eqref{GA3} is then immediate by identification.
\end{proof}

%

\begin{remark}
	The identity \eqref{GA3} can be reproved starting from \eqref{RelHermite} and making use of the classical integral representation of the It\^o--Hermite polynomials in \cite{Gh2017Mehler}.
\end{remark}


\section{Applications}  \label{SubsIntTrans}

 \subsection{New integral formula for the generalized Laguerre polynomials}
Using the obtained results one can derive new interesting integral formulas for the generalized Laguerre polynomials.
Thus, we claim the following.
\begin{theorem} The
	integral identity  
	\begin{align} \label{INtLaguerre}
		L^{(\beta-n)}_{n}  (\alpha |z|^2 -uz )  = \frac{	(-1)^n z^{n-\beta}}{ n! \pi }  	\int_{\C} \overline{v}^n \left( z -v\right) ^{\beta}
		e^{ \alpha v  \bz - u v }  e^{- |v|^2} d\lambda(v) 
	\end{align} 
	holds  for  $\beta>-1$.
\end{theorem}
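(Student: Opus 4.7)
The plan is to combine the two-variable generating function in Theorem~\ref{HajarGen1} with the partial generating function \eqref{GA5} from Proposition~\ref{HajarGen3} via Gaussian integration against $\overline{v}^n e^{-|v|^2}$. The basic tool is the orthogonality relation
\begin{equation*}
 \frac{1}{\pi}\int_{\C} \overline{v}^n v^{k} e^{-|v|^2}\, d\lambda(v) = n!\,\delta_{n,k},
\end{equation*}
which lets us ``extract'' the $v^n$-coefficient of a convergent power series in $v$.

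First, I would apply this extraction procedure to the generating identity
\begin{equation*}
\sum_{m,k=0}^{+\infty}\frac{u^m v^k}{m!\,k!}\,\psi_{k,m}^{\alpha,\beta}(z,\bz)
=\Bigl(1-\tfrac{v}{z}\Bigr)^{\beta} e^{\,zu+\alpha v\bz-uv}
\end{equation*}
of Theorem~\ref{HajarGen1}. Multiplying both sides by $\overline{v}^n e^{-|v|^2}$, integrating over $\C$, and interchanging summation with the Gaussian integral (justified by absolute convergence, since the integrand decays like $e^{-|v|^2}$ times a polynomial in $v,\overline{v}$ for fixed $u,z$), only the $k=n$ term survives, giving
\begin{equation*}
\pi\sum_{m=0}^{+\infty}\frac{u^m}{m!}\,\psi_{n,m}^{\alpha,\beta}(z,\bz)
= \int_{\C}\overline{v}^n\Bigl(1-\tfrac{v}{z}\Bigr)^{\beta} e^{\,zu+\alpha v\bz-uv-|v|^2}\, d\lambda(v).
\end{equation*}

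Next, I would replace the left-hand sum by its closed form from \eqref{GA5},
\begin{equation*}
\sum_{m=0}^{+\infty}\frac{u^m}{m!}\,\psi_{n,m}^{\alpha,\beta}(z,\bz)
=\frac{(-1)^n n!}{z^n}\,e^{uz}\,L_{n}^{(\beta-n)}\bigl(\alpha|z|^2-uz\bigr),
\end{equation*}
and solve for the Laguerre polynomial. After factoring $e^{uz}$ out of the integral (it does not depend on $v$) and writing $(1-v/z)^{\beta}=z^{-\beta}(z-v)^{\beta}$, the identity \eqref{INtLaguerre} follows.

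The only subtle point is the legitimacy of swapping $\sum_{k}$ with $\int_{\C}\,d\lambda(v)$; this is the place where an argument is needed, but it is standard: the partial sums of the generating series are dominated (on each compact set in $u,z$) by a function of the form $C(u,z)\,e^{c|v|}$ for some $c<\infty$, so Fubini applies against the Gaussian weight. No other obstacle is expected; the rest is algebraic bookkeeping.
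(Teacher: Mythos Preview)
Your proposal is correct and follows essentially the same route as the paper: both extract the $v^n$-coefficient of the generating function in Theorem~\ref{HajarGen1} via Gaussian integration against $\overline{v}^n e^{-|v|^2}$ and then identify the resulting partial generating function with the Laguerre expression from~\eqref{GA5}. The only difference is that you make the interchange of summation and integration explicit, which the paper leaves tacit.
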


\begin{proof}
	From  \eqref{genfctD}, one has
	\begin{align*} 
		\sum_{m=0}^{+\infty}\frac{u^m }{m! }\psi_{k,m}^{\alpha,\beta}(z,\bz) 
		&=
		\frac{1}{\pi}  \scal{\sum_{m,n=0}^{+\infty}\frac{u^m v^n}{m! n!}\psi_{n,m}^{\alpha,\beta}(z,\bz)  , v^k}_{L^2(\C,e^{-|v|^2})}  
		\\&=
		\frac{e^{uz}}{\pi z^\beta} \int_{\C} \overline{v}^k \left( z -v\right) ^{\beta}
		e^{ \alpha v  \bz - u v }  e^{- |v|^2} d\lambda(v) .  
	\end{align*} 
	Accordingly, the proof of \eqref{INtLaguerre} readily follows from \eqref{GA5}.
\end{proof}

\begin{remark}
	As particular case we get 
	\begin{align} 
		L^{(\beta-n)}_{n}  ( z )  = \frac{	(-1)^n z^{n-\beta}}{ n! \pi }  	\int_{\C} \overline{v}^n \left( z -v\right) ^{\beta}
		e^{- v(\overline{v} -1 ) } d\lambda(v)   
	\end{align} 
	for any $z\in \C$ by specifying $\alpha=0$ and $ u=-1 $. Also, by taking $\alpha =1 $ and $u=0$, we get 
	\begin{align} 
		L^{(\beta-n)}_{n}  (  |z|^2   )  = \frac{	(-1)^n z^{n-\beta}}{ n! \pi }  	\int_{\C} \overline{v}^n \left( z -v\right) ^{\beta}
		e^{   v  \bz  }  e^{- |v|^2} d\lambda(v) .
	\end{align} 
\end{remark}

 \subsection{Associated integral transforms}

The orthogonality property of the functions $\psi_{n,m}^{\alpha,\beta}$ suggest the consideration of two special functional spaces $ \mathcal{F}^{2,\alpha}_{\beta,n}(\C)$ and $ \widetilde{\mathcal{F}}^{2,\alpha}_{\beta,m}(\C)$ of  poly-meromorphic or anti-poly-meromorphic functions on the punctured complex plane in
$L^{2,\alpha}_{\beta}(\mathbb{C}) $ for fixed non-negative integers $m$ and $n$ with $m+\beta >-1$. These spaces are spanned by  $\psi_{n,j}^{\alpha,\beta}$, $j\geq  [ -\beta ]$, and  $\psi_{k,m}^{\alpha,\beta}$, $k= 0,1,2, \cdots$, respectively.  
Moreover, they can be seen as the  poly-meromorphic analogs of the true  poly-analytic (and anti-poly-analytic) Bargmann spaces \cite{AbreuFeichtinger2014,Vasilevski2000} defined as specific closed subspace in 
$ \ker\left( {\partial^{n+1}}/{\partial \bz^{n+1}} \right)  \cap L^{2,\alpha}_g(\C) ,$
and realized also as $L^2$-eigenspace 
$\mathcal{F}^{2,\alpha}_n(\C) = \ker( \Delta_\alpha  - \alpha n)$ associated with the $n$-th Landau levels of  the self-adjoint magnetic Laplacian 
$$
\Delta_\alpha  =  \Delta_{\alpha,0}   =  -  \frac{\partial^2}{\partial z\partial\bz  } + \alpha \bz  \frac{\partial}{\partial \bz }
$$
acting on $L^{2,\alpha}_g(\C)$ (see \cite{AskourIntissarMouayn2000,Shigekawa1987}). 

Next, we show that $ \widetilde{\mathcal{F}}^{2,\alpha}_{\beta,m}(\C) = \overline{Span\{ \psi_{k,m}^{\alpha,\beta}, \, k=0,1,2, \cdots \} }^{L^{2,\alpha}_{\beta}(\mathbb{C})}$  can be realized as the image of the classical Segal--Bargmann space $ \mathcal{F}^{2,\alpha}(\C)=Hol(\C) \cap L^{2,\alpha}_g(\C)$ of holomorphic functions in the Hilbert space of the Gaussian functions, $L^{2,\alpha}_g(\C) :=L^2(\C, e^{-\alpha|\xi|^2} d\lambda)$, by means of the specific integral transform 
\begin{align}\label{Transform2GA}
	\mathcal{B}_{m}^{\alpha,\beta}f (z) &:=
	\frac{\alpha}{\pi} \left( 	\frac{\alpha^{\beta+m}}{ \Gamma(\beta+m+1)}\right)^{1/2}  z^{m}
	\int_{\C}   \left( 1-\frac{\bw}{z} \right) ^{\beta +m}  e^{-\alpha \bw (w- \bz)}  f(w) d\lambda(w)  ,
\end{align}
provided that the integral exists. Here 
  $m$ is a fixed integer  such that $m> -\beta-1$ for given   $\beta > -1$.  
	In fact, the transform $\mathcal{B}_{m}^{\alpha,\beta}$ is well defined and maps the orthonormal basis $e_n^\alpha(z) = (\alpha^{n+1}/\pi n!)^{1/2}z^n$ of $ \mathcal{F}^{2,\alpha}(\C)$ to an orthonormal basis of $\widetilde{\mathcal{F}}^{2,\alpha}_{\beta,m}(\C) $. More precisely, we have 
		$$ 	\mathcal{B}_{m}^{\alpha,\beta} (e_n) (z)  = \left(  \frac{\alpha^{\beta+m+1}}{\pi \alpha^n \Gamma(\beta+m+1)  n! }\right)^{1/2} \psi_{n,m}^{\alpha,\beta}(z,\bz).$$ 
	This follows by observing that the integral kernel of the transform $	\mathcal{B}_{m}^{\alpha,\beta}$ in \eqref{Transform2GA} is the generating function in \eqref{GA}.
			  Its inverse is given by 
	\begin{align}
		(\mathcal{B}_{m}^{\alpha,\beta})^{-1} (f)(w)&= 	\frac{\alpha}{\pi} \left( 	\frac{\alpha^{\beta+m}}{ \Gamma(\beta+m+1)}\right)^{1/2}  
		\int_{\C}  \frac{ (\bz-w)^{\beta +m} }{\bz^{\beta}} e^{-\alpha( \bz - w) z } f(z) d\lambda(z).
	\end{align}
It is worth noticing that for the particular case of $\beta=0$, the corresponding transform $\mathcal{B}_{m}^{\alpha,0}$ reduces further to the one considered in \cite[Remark 2.13]{BenahmadiGh2018} mapping unitarily the Segal-Bargmann space to the true  anti-poly-analytic Bargmann spaces  $\widetilde{\mathcal{F}}^{2,\alpha}_m(\C) $. 

Similarly, associated with the kernel function on $\C\times \C$ given through the partial generating function in \eqref{GA5},  
\begin{align}\label{kernelfct1}
	s_{n}^{(\alpha,\beta)}(u,z)   &
	:	=
	\frac{(-1)^n n! }{ z^n }    e^{uz} L^{(\beta-n)}_{n}  (\alpha |z|^2 -uz ) ,
\end{align}
we consider the integral transform
\begin{align}\label{Transform1GA5}
	\mathcal{S}_{n}^{\alpha,\beta}f (z)  &:= 
	\frac{(-1)^n n! }{ z^n }   \int_{\C}  L^{(\beta-n)}_{n}  (\alpha |z|^2 -uz )   e^{-u(\overline{u} - z)} f(u)  d\lambda(u).
\end{align}
The image of 
	$ \mathcal{F}^{2,\alpha}(\C)$ by $	\mathcal{S}_{n}^{\alpha,\beta}$ for arbitrary $\beta >-1$ is the closed subspace of $L^{2,\alpha}_{\beta}(\mathbb{C}) $ spanned by  $\psi_{n,j}^{\alpha,\beta}$ for varying $j=0,1,2, \cdots$,
$$	\mathcal{S}_{n}^{\alpha,\beta}(\mathcal{F}^{2,\alpha}(\C)) =  \overline{Span\{ \psi_{n,j}^{\alpha,\beta}; \, j=0,1,2, \cdots \} }^{L^{2,\alpha}_{\beta}(\mathbb{C})} =: \widehat{\mathcal{F}}^{2,\alpha}_{\beta,n}(\C) .$$
For  $\beta >0$, it reduces to the restricted $\beta$-poly-meromorphic space $\widehat{\mathcal{F}}^{2,\alpha}_{\beta,n}(\C) $  which is strictly contained in $\mathcal{F}^{2,\alpha}_{\beta,n}(\C) : = \overline{Span\{ \psi_{n,j}^{\alpha,\beta}; \, j= [-\beta], [-\beta]+1,\cdots \} }^{L^{2,\alpha}_{\beta}(\mathbb{C})}$.
For $-1 < \beta \leq 0$, this is exactly the poly-meromorphic space $ \mathcal{S}_{n}^{\alpha,\beta}(\mathcal{F}^{2,\alpha}(\C)) =  \widehat{\mathcal{F}}^{2,\alpha}_{\beta,n}(\C) = \mathcal{F}^{2,\alpha}_{\beta,n}(\C)$

\end{document}